\pgfplotsset{compat=1.18}
\newcommand{\E}{ \mathbbm{E}}
\newcommand{\R}{ \mathbbm{R}}
\theoremstyle{definition}
\newtheorem{theorem}{Theorem}
\newtheorem{proposition}{Proposition}
\newtheorem{lemma}{Lemma}
\newtheorem{corollary}{Corollary}
\newtheorem*{definition}{Definition}
\theoremstyle{remark}
\newtheorem{remark}{Remark}
\DeclareMathOperator*{\argmin}{\arg\min}
\DeclareMathOperator*{\argmax}{\arg\max}
\DeclareMathOperator{\cvh}{\text{cvh}}
\newcommand{\und}{\underline}
\newcommand{\vst}{\vspace{3mm}}
\title{Obviously Strategy-Proof Multi-Dimensional Allocation and the Value of Choice}
\author{Quitz\'{e} Valenzuela-Stookey\thanks{Department of Economics, University of California, Berkeley. Email: quitze@berkeley.edu.}}
\date{January 27, 2026}
\begin{document}
\maketitle

\begin{abstract}
   A principal must allocate a set of heterogeneous tasks (or objects) among multiple agents. The principal has preferences over the allocation. Each agent has preferences over which tasks they are assigned, which are their private information. The principal is constrained by the fact that each agent has the right to demand some status-quo task assignment. I characterize the conditions under which the principal can gain by delegating some control over the assignment to the agents. Within a large class of delegation mechanisms, I then characterize those that are obviously strategy-proof (OSP), and provide guidance for choosing among OSP mechanisms. 
\end{abstract}

Consider a principal who needs to allocate a set of tasks among a set of agents, but is constrained by the fact that agents' preferences over tasks are unobservable. For example, a hospital seeks to implement a new mechanism for allocating cases to doctors \citep{handel2025thinking}. The hospital has an estimate of the performance of each doctor on different types of cases, and would like to use this information to reform the existing case-assignment system. However, doctors also have preferences over the cases that they handle, and will resist the adoption of a new mechanism if it leads to a less-preferred assignment. Knowing the doctors' preferences would enable the hospital to assign cases more efficiently while ensuring that doctors are not made worse off. However, as doctors' preferences are their private information the hospital faces a trade-off: in order for the allocation of cases to respond to doctors' preferences, the hospital must relinquish some control over the assignment, and thus respond less directly to performance metrics. Is this trade-off worth making, and if so, how should the system be designed?

Such trade-offs arise in a number of settings, including the assignment of Children's Protective Services investigators to cases \citep{baron2025mechanism}, tax collectors to households \citep{bergeron2024supermodular}, and workers to tasks \citep{acemoglu2018modeling, valenzuela2025automation}. Moreover, the word ``tasks'' need not be interpreted literally; similar forces arise in other allocation problems, such as the matching of refugees to locations \cite{bansak2018improving, ahani2021placement, delacretaz2023matching}. Formally, these are multi-dimensional screening problems without transfers, in which the designer has a known objective but is constrained by the need to ensure participation in the face of agents' unknown preferences.\footnote{The primary focus here is on problems in which monetary transfers between the designer and the agents are either infeasible (as in school choice or organ exchange, \citealt{roth2007repugnance}) or beyond the scope of the particular mechanism-design question. For example, a firm may seek to improve the allocation of tasks to workers whose wages have already been determined by a collective bargaining agreement. That said, it is not difficult to extend the model to include transfers, as discussed in \Cref{sec:transfers}.} In this paper I study two general conceptual questions that arise in the context of such task-allocation problems. First, when is there value in using mechanisms that respond to agents' preferences over tasks? In other words, when is it optimal to grant agents the ability to make choices that influence the allocation? Second, if there is value in choice-based mechanisms, which ones should we use in practice? 

The large literature on multi-dimensional screening has shown that identifying optimal mechanisms is general analytically intractable (e.g. \citealt{thanassoulis2004haggling, manelli2006bundling}), and even characterizing incentive constraints can be challenging (e.g. \cite{rochet1987necessary, lahr2024extreme}). Nonetheless, it turns out to be possible to characterize the conditions under which choice-based mechanisms are optimal. Moreover, I make progress on the normative question of which mechanisms to use by characterizing the set of obviously strategy-proof (OSP) mechanisms within a restricted class of \textit{trading mechanisms}. 

The setting is as follows. There are $I$ different types of tasks, and $J$ agents (both finite). Completing a task of type $i$ with agent $j$ generates a social cost which is known to the designer, and potentially heterogeneous across tasks and agents. In addition, agents have preferences over the tasks they are assigned, which are their private information. The goal of the designer is to allocate tasks among the agents so as to minimize the total social cost. However, the designer is constrained by agents' preferences: each agent has an outside option assignment of tasks, which they have the right to demand.\footnote{For example, an agent's outside option could represent the terms of their current contract. More abstractly, the outside option could reflect the designer's notion of a fair workload: the designer wishes to assign tasks efficiently, without unfairly burdening any individual agent. See for example \cite{baron2025mechanism}.}

Trading mechanisms are a flexible class of mechanisms which partially delegate the choice of the allocation to the agents. Such a mechanism specifies, for each agent, $j$, a menu of \textit{trading sets}, $\{ T_j^k\}_{k \in K_j}$, where each trading set $T_j^k$ is a subset of $\R^I_+$, and $K_j$ is an arbitrary index set (which may be finite or infinite). A trading set is a set of possible bundles of tasks that $j$ can receive. The designer asks each agent to choose a trading set from their menu, and commits to a selection rule that respects these choices: given the profile of chosen trading sets, the designer must select an allocation such that each agent's assigned bundle of tasks lies in their chosen set. On one extreme, the designer can offer agent $j$ the singleton menu consisting of all assignments that $j$ would prefer to their outside option under any possible preference. Doing so for all $j$ is equivalent to choosing an allocation without eliciting information from the agents. By offering larger menus the designer gains flexibility in responding to agent's preferences in exchange for sacrificing some control over the final allocation. Intuitively, choice relaxes the participation constraint, but introduces new incentive constraints. The class of trading mechanisms is rich; it includes the ``binary classification mechanisms'' studied by \cite{baron2025mechanism}, as well as variants of widely-studied market-based mechanisms (e.g. \cite{hylland1979efficient}, \cite{budish2011combinatorial}).

A mechanism is obviously strategy-proof if (loosely) the worst outcome an agent could obtain if they follow the strategy prescribed by the mechanism is better than the best possible outcome from any deviation \citep{li2017obviously}. OSP is often interpreted as a notion of simplicity, as it implies that the solution to the decision problem faced by participants is easy to identify.\footnote{In lab experiments \cite{li2017obviously} shows that subjects are more likely to play dominant strategies in obviously strategy-proof mechanisms. See also \cite{breitmoser2022obviousness}.} Simplicity has particular normative appeal in settings such as the current one, in which preferences and allocations are potentially high-dimensional objects. 

The first main result, \Cref{thm:choice}, provides necessary and sufficient conditions for the optimality of (non-trivial) choice-based mechanisms among OSP trading mechanisms. In fact, \Cref{thm:choice} shows that in terms of determining whether choice has value, the restriction to OSP trading mechanisms is generally without loss: if choice is optimal when we can use any mechanism (not just trading mechanisms) satisfying the weaker Bayesian incentive compatibility (BIC) constraint, then it is also optimal among OSP trading mechanisms. 

The conditions under which choice has value are mild, and should be satisfied in most applications. The question then is which mechanisms to use. The second main result, \Cref{thm:osp_trade}, characterizes the set of OSP trading mechanisms. Effectively, all and only OSP trading mechanisms take the following form. Either an agent is given no choice (i.e. they are offered a singleton menu consisting of all allocations that are unambiguously preferred to their outside option) or they are offered a menu in which each trading set is a ray with an endpoint at the status-quo allocation. In addition, the set of rays that constitute an agent's menu must satisfy a ``polarization'' condition, which constrains the degree of similarity between any two rays. We refer to these as \textit{polarized ray} mechanisms.  

Any polarized ray mechanism, in addition to being obviously strategy-proof, satisfies a number of desirable properties. The size of the menu that each agent is offered is at most $I + 1$ (\Cref{cor:polarizing_dimension}). Moreover, all that the agents need to know about the mechanism is the form that their trading sets take; additional details about the designer's selection rule or the trading sets of other agents are irrelevant from their perspective. Thus the designer is free to choose any selection rule, and in doing so optimally can  guarantee a (weak) improvement over the status-quo allocation (i.e. the allocation in which all agents receive their outside-option assignment). This is true regardless of the distribution over agents' preferences, or even the form that these preferences take. Thus the mechanism is robust to model misspecification in these dimensions. 

Given these properties, any polarized ray mechanism would be a reasonable candidate to use in practice. On the other hand, solving for the optimal OSP mechanism is in general a difficult computational problem. While the ability to restrict attention to polarized-ray mechanisms simplifies the problem significantly, the optimization program remains challenging, and I do not solve it here. However, I provide some guidance for choosing among polarized ray mechanisms by studying a ``large market'' relaxation of the optimal design problem. In this relaxation, the dual program has a convenient formulation, which can facilitate the search for an optimal mechanism. 

\vst
\noindent\textit{Related literature}

\vst
The task-allocation problem studied here is a multi-dimensional screening problem (where the word ``task'' should be understood broadly).\footnote{In particular, a screening problem with a type-dependent participation constraint and no transfers.} \cite{budish2012matching} surveys the approaches to such problems, and categorizes them broadly as either ``good properties'' or ``mechanism design''. Under the good-properties (or axiomatic) approach, the designer specifies a set of desirable properties (e.g. stability, Pareto efficiency) and identifies a mechanism that satisfies them. Deferred acceptance \citep{gale1962college}, top trading cycles \citep{shapley1974cores}, and market-based mechanisms \citep{varian1973equity, hylland1979efficient, budish2011combinatorial} are perhaps best understood in this way. In contrast, the mechanism-design approach is to ``maximize an objective subject to constraints''. In a single-dimensional setting, the optimal-auctions approach of \cite{myerson1981optimal} is a classical example. In multi-dimensional settings, however, solutions of the mechanism-design form are limited; extensive work has demonstrated the technical difficulty of these problems \citep{pavlov2011optimal, daskalakis2014complexity, daskalakis2015strong, hart2019selling}. Even characterizing incentive compatibility in a tractable way has proved challenging \citep{rochet1987necessary, lahr2024extreme}.

The current paper makes (modest) progress on the mechanism-design approach in a particular multi-dimensional screening setting.\footnote{There is also an empirical literature studying the value of choice-based mechanisms in other settings. For example, \cite{agarwal2025choices} study this question in the context of deceased-donor kidney waitlists.} The approach here builds on two basic premises. First, I focus on obvious strategy-proofness, as opposed to weaker solution concepts such as Bayesian incentive compatibility (the standard in the mechanism-design approach). Obvious strategy-proofness has been widely studied due to its normative appeal, and has been identified as a promising avenue in multi-dimensional settings \citep{palacios2024combinatorial}.\footnote{\cite{palacios2024combinatorial} write ``We argue that the behavioral implications for future developments moving toward OSP [combinatorial allocation] formats should not be underestimated.''} Studies of OSP in other multi-dimensional problems include \cite{feldman2014combinatorial, dutting2016posted, golowich2021computational}. However, work to analytically characterize OSP in such settings has been limited, and OSP has not been applied to the type of task allocation problem studied here. 

Second, I restrict attention primarily to trading mechanisms.\footnote{An exception to both the OSP requirement and the restriction to trading mechanisms is \Cref{thm:choice}, which applies to all Bayesian incentive compatible mechanisms.} While this restriction sacrifices some generality, it allows me to fully characterize obvious strategy-proofness. Moreover, trading mechanisms are still a rich class of mechanisms, and possess a number of normatively-appealing properties. Trading mechanisms can be understood as a form of delegation \citep{holmstrom1977incentives, holmstrom1984theory}. Typically, delegation mechanisms are applied in settings where the agent possesses private information that is directly relevant for the principal's decision (e.g. \cite{alonso2008optimal}). Here the agents' private information is only indirectly relevant (in that the designer needs to ensure participation) and so the value of delegation is less obvious. Nonetheless, I show that non-trivial delegation can be valuable, and can in fact achieve gains whenever it is possible to do so with more sophisticated choice-based mechanisms. 

Finally, this paper is related to other work on task allocation problems \citep{baron2025mechanism, valenzuela2025automation}. The setting is most similar to that of \cite{baron2025mechanism}. There, the authors study the design of Bayesian incentive-compatible mechanisms within a restricted class of binary-classification mechanisms, in which there are effectively only two types of tasks. It turns out, however, that the solution their program is (essentially) a particular polarized-ray mechanism, in which each agent's menu has only three elements. The current paper shows that within a much broader class, only polarized-ray mechanisms are admissible if the goal is obvious strategy-proofness. \cite{baron2025mechanism} also introduce a two-step procedure to solving a large-market version of their program, which we generalize in \Cref{sec:lm}.

The remainder of the paper is organized as follows. I present the model in \cref{sec:model}. I then identify the conditions under which choice is valuable (\cref{sec:valueofchoice}) and characterize OSP trading mechanisms (\cref{sec:osp}). In \cref{sec:lm} I show how studying a large-market version of the model simplifies the optimal-design problem. \Cref{sec:conclusion} concludes. Proofs omitted from the main text are presented in the appendix.

\section{Model}\label{sec:model}

There is a finite set of tasks (or objects), indexed by $i \in \mathcal{I} = {1,\dots,I}$ (with $I \geq 2$) and a finite set of agents, indexed by $j \in \mathcal{J} = \{1,\dots, J\}$. An allocation is a matrix $Z \in \mathbbm{Z}_{\geq 0}^{\mathcal{I}\times\mathcal{J}}$ such that row $i$ sums to $n^i$, where $n^i$ is the number of instances of task $i$ that need to be completed.\footnote{Where $ \mathbbm{Z}_{\geq 0}$ denotes the non-negative integers, inclusive of zero.} Let $\mathcal{Z}$ be the set of allocations. A random allocation is a distribution over allocations. For a random allocation $\mu \in \Delta(\mathcal{Z})$, let $\mu_j \in \R^I_+$ denote the expected assignment for agent $j$, where $\mu_j(i)$ is the expected number of task $i$ assigned to $j$. For convenience, where it will not cause confusion I refer to elements of $\Delta(\mathcal{Z})$ simply as allocations, and to $\mu_j$ as $j$'s \textit{assignment}.

The designer's objective is to minimize the expected total social cost, given by 
\[
\sum_{i \in \mathcal{I}}\sum_{j \in \mathcal{J}} \pi_j(i) \mu_j(i),
\]
for some known weights $\pi_j(i)$.\footnote{Describing the objective as minimizing social cost, as opposed to maximizing welfare, is simply a normalization.} We can think of $\pi_j(i)$ as representing the (observable) performance of agent $j$ on task $i$.

On the other hand, the designer does not know the preferences of agents over tasks. Let $c_j(i) > 0$ be the private cost to agent $j$ of completing task $i$. Then the total \textit{workload} for $j$ from random allocation $\mu$ is
\[
\sum_{ i \in \mathcal{I} } c_j(i) \mu_j(i)
\]
and we maintain the normalization that agents prefer lower workload.\footnote{The substantive assumption is $c_j(i) > 0$, i.e. all tasks have the same qualitative effect on agents' payoffs.} Let $\mathbf{C}$ be the set of preference profiles, with typical element $\mathbf{c} = (c_j(\cdot))_{j=1}^J$. Let $F_j$ be the distribution of $j$'s preference type $c_j$. We maintain the assumption that types are independent across agents, and that $F_j$ has full support on a hypercube $C_j \subset \R^I_+$. 

The agents' outside option is defined by a random allocation $\sigma \in \Delta(\mathcal{Z})$, which we refer to as the status quo. Each agent has the right to demand their marginal allocation from the status quo, $\sigma_j$. This participation decision occurs at the interim stage, i.e. after each agent observes their own preferences but before they engage with the designer's mechanism.\footnote{In particular, if the designer's mechanism involves randomization, the participation decision takes place before this randomization realizes. It fact, OSP trading mechanisms, characterized below, satisfy a stronger ex-post participation constraint which holds after all types are revealed (but before the designer's randomization resolves).}

If the designer could implement any random allocation in $\Delta(\mathcal{Z})$ then the social-cost minimization problem would be a straightforward linear program. However, the designer here is constrained by the fact that $i)$ agents can demand their status-quo assignment and, $ii)$ the designer does not know the agents' preferences. The question is whether the designer can improve upon $\sigma$, in terms of the expected social cost, by using a choice-based mechanism (and if so, how). Before answering these questions, we introduce the class of trading mechanisms. 

\subsection{Trading mechanisms}

A \textit{trading protocol} consists of an \textit{endowment} $\eta \in \Delta(\mathcal{Z})$; a menu of \textit{trading sets} $\{T_j^k\}_{k \in K_j}$ for each agent $j$, where $T_j^k \subset \R^I_+$ and $K_j$ is an arbitrary index set (which may be finite or infinite); and a \textit{selection rule} $P: \prod_{j \in \mathcal{J}} K_j \rightarrow \Delta(\mathcal{Z})$ such that 
\[
P_j((k_j)_{j\in \mathcal{J}}) \in T_j^{k_j} \quad \forall \quad (k_j)_{j\in \mathcal{J}} \in \prod_{j \in \mathcal{J}} K_j, \ j \in \mathcal{J},
\]
where $P_j$ denotes the assignment for $j$ under $P$. The protocol works as follows: each agent chooses a trading set $k \in K_j$, and given the profile of choices $k_{\mathcal{J}} = (k_j)_{j\in \mathcal{J}}$ the allocation is determined by the selection rule $P(k_{\mathcal{J}})$. The essential restriction is that the selection rule must respect the agents' choices of trading sets. By definition a trading protocol must be feasible, in that such a selection must exist. Additionally, each agent has the right to demand their endowment, so $\{\eta_j\} \in \{T_j^k\}_{k \in K_j}$ for all $j$. Note that the status quo, $\sigma$, does not appear in this definition. While we are primarily interested in trading protocols where $\eta = \sigma$, we define trading protocols more generally here in order to separate the incentive and participation constraints.

Given a trading protocol $\mathcal{T}$, a strategy for agent $j$ is a map $s_j: C_j \mapsto K_j$. We refer to a pair $(\mathcal{T}, s_{\mathcal{J}})$, where $\mathcal{T}$ is a trading protocol and $s_{\mathcal{J}} = \{s_j: C_j \mapsto K_j\}_{j \in \mathcal{J}}$ is a strategy profile, as a \textit{trading mechanism}.

\vst
\noindent\textbf{Examples.} We describe three examples of trading mechanisms. 

\begin{itemize}
    \item \textit{Bilateral trading mechanisms.} The simplest trading mechanism, defined by $\gamma \in \R^I$ and two agents $j',j''$ as follows: agent $j'$ is offered the option keep their status quo or trade to $\sigma_{j'} + \gamma$, while agent $j''$ is offered the option to trade from $\sigma_{j''}$ to $\sigma_{j''} - \gamma$. Trade occurs if and only if both $j'$ and $j''$ wish to do so. All other agents receive their status quo. The natural strategies are for agent $j'$ (agent $j''$) to trade if and only if $\gamma \cdot c_{j'} \leq 0$ ( $\gamma \cdot c_{j''} \geq 0$).

     To generalize the bilateral trading mechanisms, one margin is to move beyond bilateral trade by allowing for exchanges among multiple agents. Another margin is to allow for a larger range of trades, by offering more and/or larger trading sets.

    \item \textit{Two-price binary-classification mechanisms.} In a binary-classification mechanism, introduced by \cite{baron2025mechanism}, there are (effectively) two tasks, $A$ and $B$.\footnote{Formally, given an arbitrary set of tasks a binary partition of tasks into classes $A$ and $B$ is defined, and the mechanism only takes values $\mu \in \Delta(\mathcal{Z})$ such that $\mu$ and $\sigma$ have the same marginal distributions conditional on each of the two classes. In other words, each agent's allocation can be described by pair $(n^A_j,n^B_j) \in \R^2_+$, where the agent receives $n^k_j$ tasks drawn from class $k \in \{A,B\}$ according to the conditional distribution of $\sigma$.} A two-price binary-classification mechanism is one in which each agent can is given two prices $(p_j^1,p_j^2)$ and can report whether they would like to trade for (trade away) task $A$ at a rate of $p_j^1$ fewer ($p_j^2$ more) of task $B$. This defines a trading protocol where the menu for $j$ is $\{ \{\sigma_j \}, \{ (\sigma_j(A), \sigma_j(B)) 
    + x(1, -p_j^1) : x \in \R_+ \}, \{ (\sigma_j(A), \sigma_j(B)) + x(-1,p_j^2) : x \in \R_+ \} \}$.
    
    \item \textit{Pseudo-market mechanisms.} The natural benchmark for this type of combinatorial allocation problem is the class of market-based mechanisms \citep{varian1973equity,hylland1979efficient, budish2011combinatorial}. The idea would be to endow each agent with their status quo $\sigma_j$, and specify a ``price'' for each task in units of a fictional currency. Agents would then be free to buy and sell tasks at the specified prices. In the standard approach, prices are allowed to adjust in response to realized demands in order to clear the market. As individual agents have non-zero price impact, such mechanisms are only approximately strategy-proof in large (finite) markets. Such pseudo-market mechanisms are not trading protocols, but we can specify a trading-protocol version of the mechanism in which a price vector $p$ \textit{fixed} up front, agents submit their demanded task assignments given $p$, and the designer then executes only the trades that can be accommodated under market clearing. In this trading protocol, for every point $x$ on the budget $\{x \in \R^I_+ : p \cdot x = p \cdot \sigma_j\}$, there is $k \in K_j$ such that $\{x, \sigma_j\} = T_j^k$. We could generalize this trading protocol by allowing agents to submit non-degenerate demand sets, where trading sets would take the form $\{A,\sigma_j\}$ for $A \subset \{x \in \R^I_+ : p \cdot x = p \cdot \sigma_j\}$.
\end{itemize} 

The goal is to characterize the set of obviously strategy-proof trading mechanisms. 

\begin{definition}
    Given a trading protocol $\mathcal{T}$, an action $k \in K_j$ is obviously dominant for agent $j$ with type $c_j$ if 
    \[
    \sup_{k_{-j} \in K_{-j}} \ c_j \cdot P_j(k, k_{-j}) \leq \inf_{k_{-j} \in K_{-j}} c_j \cdot P_j(k', k_{-j})
    \]
    for all $k' \in K_j$.
\end{definition}
\begin{definition}
    Given a trading protocol $\mathcal{T}$, a strategy profile $\{s_j: C_j \mapsto K_j\}_{j \in \mathcal{J}}$ is \textit{obviously strategy-proof} (OSP) if $s_j(c_j)$ is obviously dominant for all $j \in \mathcal{J}$ and $c_j \in C_j$. We say that a trading mechanism $(\mathcal{T}, s_{\mathcal{J}})$ is OSP if $s_j$ is OSP given $\mathcal{T}$ for all $j$.
\end{definition}

Observe that an OSP trading mechanism with $\eta = \sigma$ satisfies the agents' participation constraint ex-post, for any realized type profile, rather than merely at the interim stage.\footnote{If the mechanism involves randomization by the designer, the ex-post stage here occurs before this randomization is realized.}

\subsection{Discussion of modeling assumptions}\label{sec:transfers}

\subsubsection*{Why OSP?}

Having a mechanism that is OSP is useful for practical implementation, as it reduces the cognitive load for agents \citep{li2017obviously}. Moreover, while tractable characterizations of Bayesian incentive-compatibility in multi-dimensional screening problems have remained elusive, it turns out that we can give a simple characterization of OSP trading mechanisms.

Even if the designer would be willing to use a non-OSP mechanism, OSP mechanisms are useful for computing a lower bound on the value of choice: the fact that agents have an obviously dominant strategy means that the best-response problem for each agent is relatively easy to solve, whereas to estimate the gains from a mechanism that is not strategy proof (obviously or not) requires solving a fixed-point problem to identify an equilibrium of the induced game.

\subsubsection*{Why trading mechanisms?}

While the class of trading mechanisms is large and fairly flexible,  I do not claim that it contains all interesting and relevant mechanisms. Again, one reason to restrict attention to trading mechanisms is tractability. As discussed above, progress on the ``mechanism-design'' approach to multi-dimensional allocation problems has been limited by their intractability, but within the class of trading mechanisms we are able to make some progress. Beyond tractability, trading mechanisms have appealing properties for practical applications. For one, they are relatively simple to describe. Moreover, OSP trading mechanisms turn out to be robust to model misspecification regarding the distribution of agents' preferences, about which the designer may in practice have only limited information. 

\subsubsection*{Transfers}

As discussed above, the focus here is on settings in which transfers between the designer and agents are either infeasible \citep{roth2007repugnance} or beyond the scope of the mechanism-design exercise. While there are many applications which fit this description, there are certainly others in which transfers are an important part of the designer's toolkit. 

The model can be extended to incorporate transfers by letting one of the ``tasks'', $m$, be payments made by the agents to the designer. This (potentially) requires two changes to the model. First, we have assumed that all tasks have the same qualitative effect on agents' payoffs. If tasks are burdensome then we can continue to use the normalization $c_j(m) > 0$. However, if tasks are ``goods'' then $m$ would have the opposite effect, so we would need $c_j(m) < 0$. (We have also assumed that allocations are non-negative; we may wish to relax this restriction on $m$ in some settings so that the designer can make payments to agents). Second, we have assumed a fixed supply of each task. In a setting with transfers, this would correspond to requiring that the mechanism generate a fixed level of revenue. In many applications, we may wish to relax this constraint, which changes the space of feasible mechanisms. 

In summary, under some conditions we can incorporate transfers directly into the current model (e.g. tasks are burdensome, payments must be non-negative, and revenue must equal some predetermined level). More generally, the modifications needed to incorporate transfers do not appear to be large. However, we leave the analysis of these cases for future work. 

\section{When is choice useful?}\label{sec:valueofchoice}

We first answer the question of whether sacrificing some control of the allocation to the agents can help the designer achieve a lower expected social cost. We are interested in the answer to this question both when the designer is restricted to OSP trading mechanisms, and when they can use any Bayesian incentive compatible (BIC) mechanism. It is instructive, moreover, to also consider the value of choice under a relaxation of the status-quo constraint. Let $m$ be a mechanism, where $m(\mathbf{c}) \in \Delta(\mathcal{Z})$ is the distribution over allocations when agents report type profile $\mathbf{c}$, and write $m_j(i|\mathbf{c})$ for the expected number of $i$ that goes to agent $j$. Under the relaxed status-quo constraint, the designer is constrained to mechanisms such that  
\[
 \sum_{i \in \mathcal{I}} c_j(i) \E_{-j}[m_j(i|c_j,c_{-j})] \leq \beta_j \sum_{i \in \mathcal{I}} c_j(i) \sigma_j(i) \ \ \forall \ j \in \mathcal{J}, \ c_j \in C_j,
\]
where $\beta_j \geq 1$. In other words, no agent can be given an expected workload that is more than $\beta_j$ times their expected workload under the status quo. The baseline case is $\beta_j = 1$, while $\beta_j > 1$ allows for slack in this constraint. 

Consider the general BIC mechanism design problem, where aside from the status quo constraint and the market-clearing condition (i.e. every task instance is assigned to exactly one agent), we impose no further restrictions on the mechanism used to allocate tasks. By the standard revelation principle, it is without loss of optimality to restrict attention to incentive-compatible direct-revelation mechanisms. Then the designer's program is 
\begin{align}
    \min_{m} & \ \E \left[ \sum_{i=1}^I\sum_{j=1}^J \pi_j(i) \ m_j(i|\mathbf{c}) \right] \label{prog:1} \\
    s.t. \quad &\sum_{i \in \mathcal{I}} c_j(i) \E_{-j}[m_j(i|c_j,c_{-j})] \leq \beta_j \sum_{i \in \mathcal{I}} c_j(i) \sigma_j(i) \ \ \forall \ j \in \mathcal{J}, \ c_j \in C_j \tag{SQ}\\
    & \sum_{ i \in \mathcal{I} } c_j(i) \  \E_{-j} [ m_j(i|c_j,c_{-j})] \leq \sum_{ i \in \mathcal{I} } c_j(i) \ \E_{-j}[ m_j(i|c_j', c_{-j})] \quad \forall \ j \in \mathcal{J}, \ c_j, c_{j}' \in C_j \tag{IC}\\
    & \sum_{j \in \mathcal{J}} m_j(i|\mathbf{c}) =  n^i \quad \forall \ i \in \mathcal{I}, \ \mathbf{c} \in \mathbf{C} \tag{MC}
\end{align}

The designer here faces a multi-dimensional screening problem with a type-dependent participation constraint. While the solution to this problem is an open question, it turns out that we can answer the more limited question of whether or not the optimal mechanism must respond non-trivially to agents' preferences.  

Say that \textit{choice improves upon the status quo} if there is a mechanism $m$ that satisfies the IC, SQ, and MC constraints, is not constant in the preference profile, and yields a strictly lower expected social cost than the status quo. Say that \textit{choice is optimal} if all solutions to designer's program in \cref{prog:1} are non-constant mechanisms.\footnote{From a design perspective, we could first solve for the optimal deterministic mechanism and set this as the status quo, in which case the only question is whether choice can improve upon the status quo. However, it is also useful to understand when choice is optimal under a given status quo constraint.}

Our first result shows that in many circumstances we should expect choice to be optimal. Say that \textit{the status quo has full support} if $\sigma_j(i) > 0$ for all $i \in \mathcal{I}, j\in \mathcal{J}$.  Say that \textit{preferences have a common support} if $C_j = C_{j'}$ for all $j,j' \in \mathcal{J}$ (the distributions may still differ across agents). When preferences have a common support, it must be non-degenerate for choice to play any role. Say that preferences have \textit{bounded support} if $C_j$ is a bounded hypercube. Let $\Bar{c}^i_j = \max\{c_j(i) : c_j \in C_j \}$ and $\und{c}^i_j = \min\{c_j(i) : c_j \in C_j \}$, so $[\und{c}_j^i,\Bar{c}_j^i]$ is the domain of $j$'s preference in dimension $i$. Say that \textit{performance is identical} for all agents if $\pi_{j'} = \pi_{j''}$ for all $j',j'' \in \mathcal{J}$.

\begin{theorem}\label{thm:choice}
    If preferences have a common bounded support and the status quo has full support, then the following are equivalent 
    \begin{enumerate}[i.] 
        \item Agents' preference distributions are non-degenerate, and performance is not identical across agents.
        \item Choice improves upon the status quo among OSP trading mechanisms.
        \item Choice improves upon the status quo among all BIC mechanisms, for any $(\beta_j)_{j \in \mathcal{J}}$.
    \end{enumerate}
     If, moreover, $\beta_j = 1$ for all $j$, and the support of agents' preferences has dimension $I$, then the following is also equivalent to $i.$ - $iii.$
     \begin{enumerate}[i.]
     \setcounter{enumi}{3}
         \item Choice is optimal among OSP trading mechanisms (and a fortiori among all BIC mechanisms).\footnote{``Choice is optimal among OSP trading mechanisms'' means that every solution to the designer's program when they are additionally restricted to use OSP trading mechanisms is a non-constant mechanism. }
     \end{enumerate}
\end{theorem}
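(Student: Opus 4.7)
The plan is to prove the equivalence of (i)--(iii) via the cycle (i) $\Rightarrow$ (ii) $\Rightarrow$ (iii) $\Rightarrow$ (i), and then derive (iv) under the additional hypotheses from (i) $\Rightarrow$ (ii) together with a short linear-algebraic observation about constant mechanisms. The implication (ii) $\Rightarrow$ (iii) is immediate: any OSP trading mechanism induces (via the revelation principle) a BIC mechanism satisfying SQ with $\beta_j = 1$, which remains feasible for any $\beta_j \geq 1$, and is still non-constant. For (iii) $\Rightarrow$ (i), I argue by contrapositive: if $\pi_j$ is common across $j$, then $\sum_j \pi_j \cdot \mu_j = \pi \cdot (n^i)_{i \in \mathcal{I}}$ is independent of the allocation, ruling out improvement; and if some $F_j$ is degenerate, then the type profile is essentially known, so no mechanism can be non-constant in the sense of the definition.

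The substantive direction is (i) $\Rightarrow$ (ii), which I prove by constructing a bilateral trading mechanism as in the first example of Section 2. Fix $j', j''$ with $\pi_{j'} \neq \pi_{j''}$ and choose a trade vector $\gamma \in \R^I$ satisfying: (a) $(\pi_{j'} - \pi_{j''}) \cdot \gamma < 0$, so trade strictly lowers expected social cost; (b) both $\{c \in C : c \cdot \gamma < 0\}$ and $\{c \in C : c \cdot \gamma > 0\}$ have positive $F$-measure; and (c) $\|\gamma\|$ is small enough that $\sigma_{j'} + \gamma$ and $\sigma_{j''} - \gamma$ remain in $\R^I_+$, which is feasible by full support of $\sigma$. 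Under the natural strategies ($j'$ trades iff $c_{j'} \cdot \gamma \leq 0$, $j''$ trades iff $c_{j''} \cdot \gamma \geq 0$), OSP holds because the only event in which an agent's outcome differs between trading and not trading is when the other agent also agrees to trade, so the worst-case outcome from the prescribed action coincides with the best-case outcome from any deviation. Strict improvement then follows by independence of types: the probability of trade is positive by (b), and conditional on trade the change in social cost is $(\pi_{j'} - \pi_{j''}) \cdot \gamma < 0$ by (a).

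For (iv) under the added hypotheses ($\beta_j = 1$ and common support $C$ of dimension $I$), the key observation is that the only constant mechanism satisfying SQ is $\sigma$ itself. Indeed, for any feasible constant $\mu^*$, the SQ inequalities $c \cdot (\mu^*_j - \sigma_j) \leq 0$ for all $c \in C$ and $j \in \mathcal{J}$, combined with the market-clearing identity $\sum_j (\mu^*_j - \sigma_j) = 0$, imply that each non-positive summand in $\sum_j c \cdot (\mu^*_j - \sigma_j) = 0$ must vanish, i.e., $c \cdot (\mu^*_j - \sigma_j) = 0$ for every $j$ and every $c \in C$; full-dimensionality of $C$ then forces $\mu^*_j = \sigma_j$ for all $j$. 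Combined with the strict improvement provided by (i) $\Rightarrow$ (ii), $\sigma$ is not optimal, so no constant OSP trading mechanism is optimal, establishing (iv).

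The main obstacle is in (i) $\Rightarrow$ (ii), specifically showing that $\gamma$ can be chosen to simultaneously satisfy (a) and (b). This reduces to noting that the set of $\gamma$ satisfying (b) is a symmetric open cone in $\R^I$ (non-empty because $F_j$ is non-degenerate and $C \subset \R^I_{>0}$), so by symmetry at least one of $\gamma$ or $-\gamma$ in this cone lies in the half-space defined by (a); (c) is then just scaling. The remaining verifications (OSP, MC, feasibility of the induced random allocation, and strict improvement) are routine.
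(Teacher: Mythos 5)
Your proposal is correct in substance and follows the same architecture as the paper's proof: the equivalence of (i)--(iii) is driven by a bilateral trading mechanism (the paper's \Cref{lem:bilateral} and \Cref{prop:bilateral_exist}), and (iv) rests on the observation that under $\beta_j=1$ and full-dimensional common support the only constant mechanism satisfying (SQ) and (MC) is $\sigma$ (the paper's \Cref{cor:improve_optimal}). Where you genuinely differ is the key existence step for the trade direction $\gamma$: the paper fixes two linearly independent points $c^1,c^2$ in the common support, shows by a linear-independence contradiction that one of two cone-intersection conditions must fail, and then applies a separating-hyperplane argument (\Cref{lem:choice1}); you instead note that the set of $\gamma$ that strictly split $C$ into two positive-measure half-spaces is a nonempty, open, symmetric cone and flip the sign of $\gamma$ to land in the half-space $(\pi_{j'}-\pi_{j''})\cdot\gamma<0$. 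Your route is more elementary and avoids the case analysis, while the paper's \Cref{lem:gap_improve_optimal} is more general than your one-line summation for (iv) (it also quantifies the slack when $\beta_j>1$), generality the theorem does not need at this point.

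Two small points should be patched. First, the sign-flip claim as literally stated fails if your chosen $\gamma$ satisfies $(\pi_{j'}-\pi_{j''})\cdot\gamma=0$: neither $\gamma$ nor $-\gamma$ then lies in the open half-space. Since the cone is open and $\pi_{j'}\neq\pi_{j''}$, it cannot be contained in that hyperplane, so you should first pick (or perturb to) a $\gamma$ in the cone with $(\pi_{j'}-\pi_{j''})\cdot\gamma\neq 0$ and only then invoke symmetry. Second, the theorem asserts that (iv) is \emph{equivalent} to (i)--(iii), so you also need (iv) $\Rightarrow$ (ii); this follows in one line from your own observation---if choice is optimal then $\sigma$ is not a solution, so some feasible OSP trading mechanism strictly improves on $\sigma$, and it must be non-constant because $\sigma$ is the only constant feasible mechanism---but it should be stated.
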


Condition \textit{i.} in \Cref{thm:choice} is clearly a necessary condition for choice to improve upon the status quo; if preference distributions are degenerate there is no role for choice, and if performance is identical there is no scope for improvement. The equivalence between \textit{i.} \textit{ii.}, and \textit{iii.} shows that under fairly mild conditions (common support of preferences and full-support status quo) choice improves upon the status quo whenever the problem is non-trivial, whether or not the designer is restricted to OSP trading mechanisms. Moreover, in the baseline case where there is no slack in the SQ constraints and agents preferences are not degenerate along any dimension, choice is also optimal whenever the problem is non-trivial. 

\subsection{Proof of \texorpdfstring{\Cref{thm:choice}}{}}

To understand \Cref{thm:choice}, it is useful to separate it into constituent parts. First, consider the gap between ``choice improves upon the status quo'' and ``choice is optimal''.  We relate this gap to the degrees of slack in the SQ constraints, parameterized by $\beta_j$, and the range of preferences, $[\und{c}_j^i, \Bar{c}_j^i]$. Define $\hat{c}^i := \max_{j \in \mathcal{J}} \und{c}_j^i$, so $\hat{c}^i \in [\und{c}_j^i, \Bar{c}_j^i]$ for all $i,j$ if and only if $\cap_{j \in \mathcal{J}} C_j \neq \varnothing$.

\begin{lemma}\label{lem:gap_improve_optimal}
    Assume $\cap_{j \in \mathcal{J}} C_j \neq \varnothing$. If an allocation $m \in \Delta(\mathcal{Z})$ (i.e. a non-choice-based mechanism) satisfies (SQ) and (MC) then 
    \[
    \sum_{i \in \mathcal{I}} \sum_{j \in \mathcal{J}}\left( \Bar{c}_j^i - \hat{c}^i \right) \left(m_j(i) - \beta_j\sigma_j(i) \right)^+ \leq \sum_{i \in \mathcal{I}} \sum_{j \in \mathcal{J}} \hat{c}^i \left(\beta_j - 1\right) \sigma_j(i).
    \]
\end{lemma}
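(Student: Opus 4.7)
The plan is to exploit the fact that, since $m$ is a non-choice-based allocation, the (SQ) constraint must hold simultaneously for \emph{all} $c_j \in C_j$, and in particular for the worst-case type from $j$'s perspective. Since each $C_j$ is a hypercube, the worst case for $j$ is $c_j(i) = \Bar{c}_j^i$ whenever $m_j(i) \geq \beta_j\sigma_j(i)$ and $c_j(i) = \und{c}_j^i$ otherwise. Substituting this into (SQ) and splitting the sum by the sign of $m_j(i) - \beta_j\sigma_j(i)$ yields, for each $j$,
\[
\sum_{i \in \mathcal{I}} \Bar{c}_j^i \bigl(m_j(i) - \beta_j\sigma_j(i)\bigr)^+ \leq \sum_{i \in \mathcal{I}} \und{c}_j^i \bigl(m_j(i) - \beta_j\sigma_j(i)\bigr)^-.
\]

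Next, I would use the definition of $\hat{c}^i = \max_j \und{c}_j^i$ and the assumption $\cap_j C_j \neq \varnothing$, which together imply $\und{c}_j^i \leq \hat{c}^i \leq \Bar{c}_j^i$ for every $i$ and $j$. Applied to the negative-part side this gives
\[
\sum_{i \in \mathcal{I}} \Bar{c}_j^i \bigl(m_j(i) - \beta_j\sigma_j(i)\bigr)^+ \leq \sum_{i \in \mathcal{I}} \hat{c}^i \bigl(m_j(i) - \beta_j\sigma_j(i)\bigr)^-.
\]
Subtracting $\sum_i \hat{c}^i (m_j(i) - \beta_j\sigma_j(i))^+$ from both sides and using $a^- - a^+ = -a$ converts this into
\[
\sum_{i \in \mathcal{I}} (\Bar{c}_j^i - \hat{c}^i) \bigl(m_j(i) - \beta_j\sigma_j(i)\bigr)^+ \leq -\sum_{i \in \mathcal{I}} \hat{c}^i \bigl(m_j(i) - \beta_j\sigma_j(i)\bigr).
\]

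Finally, I would sum over $j$ and invoke market clearing to eliminate the right-hand side. Both $m$ and the status quo $\sigma$ satisfy (MC), so $\sum_j m_j(i) = \sum_j \sigma_j(i) = n^i$, which gives $\sum_j (m_j(i) - \beta_j \sigma_j(i)) = -\sum_j (\beta_j - 1)\sigma_j(i)$. Substituting this into the right-hand side of the displayed inequality and interchanging sums yields exactly the claimed bound. The only conceptual step is the worst-case SQ argument; the rest is bookkeeping, and the main thing to be careful about is the sign convention on positive and negative parts and the use of $\cap_j C_j \neq \varnothing$ to guarantee that $\hat{c}^i$ lies in the admissible preference range for every $j$.
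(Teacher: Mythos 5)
Your proof is correct, and it follows the same basic strategy as the paper: substitute an extreme type profile into (SQ), decompose around $\hat{c}^i$, and use (MC) to turn the aggregate $\hat{c}^i$-weighted term into $\sum_{i}\sum_j \hat{c}^i(\beta_j-1)\sigma_j(i)$. The only real difference is organizational: the paper first sums (SQ) over $j$, splits each $c_j(i)$ as $\hat{c}^i + (c_j(i)-\hat{c}^i)$, and then plugs in the type with $c_j(i)=\hat{c}^i$ on coordinates where $m_j(i)-\beta_j\sigma_j(i)<0$ and $c_j(i)=\bar{c}_j^i$ elsewhere --- which is where $\cap_j C_j \neq \varnothing$ is needed, to ensure $\hat{c}^i \le \bar{c}_j^i$ so that this type lies in $C_j$. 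You instead plug in the genuine corner of the hypercube ($\und{c}_j^i$ on the negative coordinates), and only afterwards relax $\und{c}_j^i$ up to $\hat{c}^i$ using $\und{c}_j^i \le \hat{c}^i$, which holds by definition of $\hat{c}^i$. A small by-product of your route is that the displayed inequality itself does not actually use $\cap_j C_j \neq \varnothing$; in your argument that assumption matters only to guarantee $\bar{c}_j^i - \hat{c}^i \ge 0$, i.e., to make the left-hand side a sum of nonnegative terms, which is what \Cref{cor:improve_optimal} exploits. So the proposal is a valid, marginally more general rendering of the same argument.
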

\begin{proof}
    Assume that $m \in \mathcal{Z}$ satisfies (SQ), i.e.
    \begin{equation}      
    \sum_{i \in \mathcal{I}} m_j(i) c_j(i) \leq \beta_j\sum_{i \in \mathcal{I}} \sigma_j(i) c_j(i) \quad \forall \ j \in \mathcal{J}, \ \forall \ c_j \in C_j \tag{SQ}. 
    \end{equation}
    Then for any $\mathbf{c} \in \mathbf{C}$ we must have 
    \begin{align}
        0 &\geq \sum_{i \in \mathcal{I}} \sum_{j\in \mathcal{J}} \left(m_j(i) - \beta_j\sigma_j(i) \right) c_j(i) \notag \\
        & = \sum_{i \in \mathcal{I}} \hat{c}^i \sum_{j\in \mathcal{J}} \left(m_j(i) - \beta_j\sigma_j(i) \right) + \sum_{i \in \mathcal{I}} \sum_{j\in \mathcal{J}} \left(m_j(i) - \beta_j\sigma_j(i) \right) \left(c_j(i) - \hat{c}^i\right) \label{eq:sq_opt2}
    \end{align}
    where the inequality in the first line follows by summing the (SQ) constraint over $j$, and the second line is simple algebra. If $m$ satisfies (MC) then the first term in \cref{eq:sq_opt2} is equal to 
    \[
    \sum_{i \in \mathcal{I}} \sum_{j \in \mathcal{J}} \hat{c}^i \left(1 - \beta_j\right) \sigma_j(i)
    \]
    Then in for \cref{eq:sq_opt2} to hold, it must be that 
    \[
        \sum_{i \in \mathcal{I}}\sum_{j \in \mathcal{J}} \hat{c}^i \left(\beta_j - 1\right) \sigma_j(i) \geq \sum_{i \in \mathcal{I}} \sum_{j\in \mathcal{J}} \left(m_j(i) - \beta_j\sigma_j(i) \right) \left(c_j(i) - \hat{c}^i\right)
    \]
    for all $\mathbf{c} \in \mathbf{C}$. Under the assumption that $\cap_{j \in \mathcal{J}} C_j \neq \varnothing$ we have $\Bar{c}_j^i \geq \hat{c}^i$ for all $i \in \mathcal{I}, j \in \mathcal{J}$. Then setting $c_j(i) = \hat{c}^i$ if $\left(m_j(i) - \beta_j\sigma_j(i) \right) < 0 $ and $c_j(i) = \Bar{c}^i_j$ otherwise, we have
    \[
        \sum_{i \in \mathcal{I}}\sum_{j \in \mathcal{J}} \hat{c}^i \left(\beta_j - 1\right) \sigma_j(i) \geq \sum_{i \in \mathcal{I}} \sum_{j\in \mathcal{J}} \left(\Bar{c}_j^i - \hat{c}^i\right) \left(m_j(i) - \beta_j\sigma_j(i) \right)^+
    \]
as desired. 
\end{proof}
\begin{corollary}\label{cor:improve_optimal}
    If $\beta_j= 1$ for all $j \in \mathcal{J}$ and $\cap_{j \in \mathcal{J}} C_j$ has dimension $I$ then the only non-choice-based mechanism satisfying (SQ) and (MC) is the status quo. In particular, choice is optimal if and only if choice improves upon the status quo. 
\end{corollary}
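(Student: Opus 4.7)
The plan is to apply Lemma 1 directly with $\beta_j = 1$, observe that the right-hand side collapses to zero, and then use the strict positivity of the remaining coefficients $\Bar{c}_j^i - \hat{c}^i$ to pin down any candidate non-choice-based mechanism to coincide with the status quo in expected assignments.

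First, substituting $\beta_j = 1$ for all $j$ into the inequality of Lemma 1 yields
\[
\sum_{i \in \mathcal{I}} \sum_{j \in \mathcal{J}} \left(\Bar{c}_j^i - \hat{c}^i\right) \left(m_j(i) - \sigma_j(i)\right)^+ \leq 0.
\]
The dimension-$I$ assumption on $\cap_{j} C_j$ gives more than non-emptiness: since each $C_j$ is a hypercube, the intersection factors coordinate-wise as $\prod_{i} [\hat{c}^i, \min_{j'} \Bar{c}_{j'}^i]$, and having positive length in each of the $I$ coordinates forces $\hat{c}^i < \Bar{c}_j^i$ strictly for every $i, j$. (In particular $\cap_j C_j$ is non-empty, so Lemma 1 applies.) Thus every weight in the displayed sum is strictly positive, and since each summand is non-negative, the inequality holds only if $(m_j(i) - \sigma_j(i))^+ = 0$ for every $i, j$, i.e.\ $m_j(i) \leq \sigma_j(i)$ pointwise. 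Combining this with (MC), $\sum_{j} m_j(i) = n^i = \sum_{j} \sigma_j(i)$, forces each pointwise inequality to be tight, so $m_j = \sigma_j$ for every $j$. Hence the only non-choice-based mechanism satisfying (SQ) and (MC) is $\sigma$.

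The ``in particular'' statement then follows by routine logic. If choice improves upon the status quo, then $\sigma$ is not an optimal solution to (\ref{prog:1}); but by the first part of the corollary $\sigma$ is the unique admissible non-choice-based candidate, so every optimal mechanism must be choice-based, i.e.\ choice is optimal. Conversely, if choice is optimal then $\sigma$ is not optimal, so there exists an admissible mechanism achieving a strictly lower expected cost, and by the first part such a mechanism must be choice-based, so choice improves upon the status quo.

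I do not anticipate any serious obstacle here: the content is essentially the observation that $\beta_j = 1$ makes the right-hand side of Lemma 1 vanish, after which the dimension hypothesis supplies the strict positivity of the weights needed to rule out any deviation from $\sigma$. The only subtlety is translating ``$\cap_j C_j$ has dimension $I$'' into the coordinate-wise strict inequalities $\Bar{c}_j^i > \hat{c}^i$, which uses the hypercube structure of the $C_j$'s.
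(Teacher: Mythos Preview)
Your proof is correct and follows essentially the same route as the paper: apply Lemma~\ref{lem:gap_improve_optimal} with $\beta_j=1$ so the right-hand side vanishes, use the dimension-$I$ hypothesis to get $\Bar{c}_j^i>\hat{c}^i$ strictly, and conclude via (MC) that $m=\sigma$. You are slightly more explicit about the hypercube factorization of $\cap_j C_j$ and about the ``in particular'' logic, but the argument is the same.
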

\begin{proof}
    From \Cref{lem:gap_improve_optimal} for $\beta_j =1$, (SQ) and (MC) imply that for any non-choice-based mechanism $m \in \Delta(\mathcal{Z})$, 
    \[
    \sum_{i \in \mathcal{I}} \sum_{j \in \mathcal{J}}\left( \Bar{c}_j^i - \hat{c}^i \right) \left(m_j(i) - \sigma_j(i) \right)^+ \leq 0.
    \]
    If $\cap_{j \in \mathcal{J}} C_j$ has dimension $I$ then $\Bar{c}^i_j > \hat{c}^i$ for all $i,j$. If $m$ satisfies (MC) and $m \neq \sigma$ then $m_j(i) - \sigma_j(i) > 0$ for some $i,j$, so the above inequality is violated. Thus the only deterministic mechanism satisfying (SQ) and (MC) is the status quo. 
\end{proof}

Now consider the equivalence between \textit{i.}, \textit{ii.}, and \textit{iii.} in \Cref{thm:choice}. Underlying this equivalence is the simple class of bilateral trading mechanisms, introduced above, which can be used to improve upon the status quo. Let $j',j''$ be the agents involved in the mechanism, let $\gamma \in \R^I$ be the trade, and assume that agent $j'$ (agent $j''$) follows the strategy of trading if and only if $\gamma \cdot c_{j'} \leq 0$ ( $\gamma \cdot c_{j''} \geq 0$). We begin with the following simple observation.

\begin{lemma}\label{lem:bilateral}
    The bilateral trading mechanism for agents $j',j''$ with trade $\gamma$ is feasible and improves upon the status quo if and only if
    \begin{enumerate}[i.]
        \item $\sigma_{j'} + \gamma \geq 0$ and  $\sigma_{j''} - \gamma \geq 0$,
        \item $(\pi_{j'} - \pi_{j''})\cdot \gamma < 0$, 
        \item $\min \left\{ \gamma \cdot c_{j'} : c_{j'} \in C_{j'} \right\} < 0$,
        \item $\max \left\{ \gamma \cdot c_{j''} : c_{j''} \in C_{j''} \right\} > 0$.
    \end{enumerate}
\end{lemma}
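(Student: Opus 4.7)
The plan is to decompose the value of the bilateral trading mechanism into three independent components---feasibility, gain from trade conditional on trade occurring, and positive probability of trade---and to show that each of conditions (i)--(iv) corresponds to one of these.

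Since a trade happens only when both $j'$ and $j''$ consent, the protocol realizes at most two distinct allocations: the post-trade allocation giving $j'$ the bundle $\sigma_{j'}+\gamma$ and $j''$ the bundle $\sigma_{j''}-\gamma$ (others at status quo), or the status quo itself. Because $\gamma$ merely shifts mass between $j'$ and $j''$, market clearing and non-negativity for all other $(i,j)$ coordinates are automatic, so the only feasibility requirement is non-negativity of $\sigma_{j'}+\gamma$ and $\sigma_{j''}-\gamma$, which is exactly (i). Next I would compute the mechanism's expected social cost. Under the prescribed strategies and type independence across agents, trade occurs with probability $q = \Pr(\gamma\cdot c_{j'}\leq 0)\cdot\Pr(\gamma\cdot c_{j''}\geq 0)$, and conditional on trade the change in social cost relative to the status quo is $(\pi_{j'}-\pi_{j''})\cdot\gamma$ (the contributions of the other agents cancel, since they receive $\sigma_{j}$ in both scenarios). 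Hence the expected improvement over the status quo is $-q\,(\pi_{j'}-\pi_{j''})\cdot\gamma$, which is strictly positive iff $q>0$ and $(\pi_{j'}-\pi_{j''})\cdot\gamma<0$; the latter is (ii).

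It remains to translate $q>0$ into (iii) and (iv). By independence, $q>0$ iff each factor is positive, and by symmetry it suffices to treat $j'$. Using full support of $F_{j'}$ on the hypercube $C_{j'}$ and linearity of $c_{j'}\mapsto\gamma\cdot c_{j'}$, the factor $\Pr(\gamma\cdot c_{j'}\leq 0)>0$ is equivalent to (iii): if $\min_{c\in C_{j'}}\gamma\cdot c<0$ then $\{c:\gamma\cdot c<0\}$ has nonempty relative interior in $C_{j'}$, and hence positive $F_{j'}$-mass; conversely, if that minimum is $\geq 0$ then the event $\{\gamma\cdot c_{j'}\leq 0\}$ is contained in the hyperplane $\{\gamma\cdot c=0\}$, a lower-dimensional face of $C_{j'}$ that carries no mass under a full-support (atomless) distribution. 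The main obstacle is this last step, since translating ``full support on a hypercube'' into ``positive probability on the open wedge $\{\gamma\cdot c<0\}$'' implicitly uses that $F_{j}$ assigns no mass to the boundary hyperplane $\{\gamma\cdot c=0\}$; this is immediate under an atomlessness assumption (standard in this literature and implicit in the paper's setup), and otherwise the boundary cases would require a minor sharpening of (iii)/(iv). The first two steps are essentially bookkeeping once the structure of the protocol is unpacked.
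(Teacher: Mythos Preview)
Your proposal is correct and follows essentially the same approach as the paper: decompose into feasibility (condition (i)), positive probability of trade via full support (conditions (iii) and (iv)), and strict cost reduction conditional on trade (condition (ii)). The paper's proof is in fact terser than yours---it only spells out the sufficiency direction explicitly---so your more careful treatment of the ``only if'' direction (in particular the atomlessness point needed to rule out mass on the hyperplane $\{\gamma\cdot c = 0\}$) is a welcome elaboration rather than a departure.
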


\begin{proof}
    Trade is feasible by condition $i.$ Agent $j'$ (resp. $j''$) is willing to trade if and only if $\gamma \cdot c_j' \leq 0$ (resp. $\gamma \cdot c_{j''} \geq 0$). Thus conditions $iii.$ and $iv.$ ensure that trade occurs with positive probability (given the maintained assumption that the distribution has full support). Condition $ii.$ implies that when trade does occur, it reduces the social cost. 
\end{proof}

The real question is whether we can find a bilateral trading which satisfies these conditions. The following sufficient conditions guarantee that such a mechanism exists, and immediately implies the conclusions of \Cref{thm:choice}.

\begin{proposition}\label{prop:bilateral_exist}
    If there exist $j',j''$ such that
    \begin{enumerate}[i.]
        \item $\pi_{j'} \neq \pi_{j''}$; 
        \item $C_{j'} \cap C_{j''}$ is non-empty and non-degenerate; and
        \item $\sigma_{j'}(i) > 0$ and $\sigma_{j''}(i) > 0$ for all $i \in \mathcal{I}$;
    \end{enumerate}
    then there is a bilateral trading mechanism that improves upon the status quo.
\end{proposition}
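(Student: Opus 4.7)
The plan is to invoke Lemma~\ref{lem:bilateral} by explicitly constructing a trade vector $\gamma \in \R^I$ that satisfies all four of its conditions. The main idea is to use the non-degeneracy of $C_{j'} \cap C_{j''}$ to fix an interior point $c^\circ$, and then to take $\gamma$ orthogonal to $c^\circ$; small perturbations of $c^\circ$ within the intersection will then straddle the hyperplane $\{c : \gamma \cdot c = 0\}$, delivering conditions $iii$ and $iv$ simultaneously via a ball argument.

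Concretely, I would proceed in three steps. First, since $C_{j'} \cap C_{j''}$ is non-empty and non-degenerate, it has non-empty interior in $\R^I$; I would pick an interior point $c^\circ$ that is not a scalar multiple of $\pi_{j'} - \pi_{j''}$. This second requirement is achievable because the interior is $I$-dimensional (with $I \geq 2$) while the line spanned by $\pi_{j'} - \pi_{j''}$ is at most one-dimensional, so I can always perturb $c^\circ$ within the interior to leave that line. Fix $r > 0$ with $B(c^\circ, r) \subset C_{j'} \cap C_{j''}$. Second, on the hyperplane $H = \{y \in \R^I : y \cdot c^\circ = 0\}$ (which has dimension $I-1 \geq 1$), the linear functional $y \mapsto (\pi_{j'} - \pi_{j''}) \cdot y$ is not identically zero, because $c^\circ$ is not parallel to $\pi_{j'} - \pi_{j''}$. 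So I can choose $v \in H$ with $(\pi_{j'} - \pi_{j''}) \cdot v < 0$ and set $\gamma = tv$ for a scalar $t > 0$ to be determined. Third, I would verify Lemma~\ref{lem:bilateral}'s four conditions: $(i)$ holds for $t$ small enough by the strict positivity of $\sigma_{j'}$ and $\sigma_{j''}$ from assumption $iii$; $(ii)$ is immediate by construction; for $(iii)$ and $(iv)$, since $\gamma \cdot c^\circ = 0$, the perturbed points $c^\circ \mp (r/2)\gamma/\|\gamma\|$ both lie in $B(c^\circ, r) \subset C_{j'} \cap C_{j''}$ and satisfy $\gamma \cdot c = \mp (r/2)\|\gamma\|$, so they witness the strict inequalities in $(iii)$ and $(iv)$ respectively.

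The main obstacle is the compatibility between $(ii)$ and $(iii)$--$(iv)$. Forcing $\gamma \perp c^\circ$ trivializes the incentive conditions via the ball argument but cuts the space of admissible $\gamma$ from $I$ down to $I-1$ dimensions, and we must ensure this reduced space still meets the open half-space $\{\gamma : (\pi_{j'} - \pi_{j''}) \cdot \gamma < 0\}$. The tailoring step of picking $c^\circ$ off the line spanned by $\pi_{j'} - \pi_{j''}$ is precisely what keeps these two requirements compatible, and it explicitly uses both $I \geq 2$ and the full dimensionality of $C_{j'} \cap C_{j''}$ guaranteed by assumptions $i$--$ii$.
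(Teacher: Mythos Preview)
Your argument is correct and complete: the orthogonality trick, together with the open ball in the intersection, cleanly delivers all four conditions of Lemma~\ref{lem:bilateral}, and the scaling step for condition~$(i)$ is unaffected because the perturbation direction $\gamma/\|\gamma\|$ does not depend on $t$.

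Your route is genuinely different from the paper's. The paper proceeds in two stages: it first proves an auxiliary lemma (Lemma~\ref{lem:choice1}) that finds $\gamma$ via the separating hyperplane theorem, separating $\cvh(c'',0)$ from $\cvh(\pi_{j'}-\pi_{j''},c',0)$ for suitably chosen $c',c''$; it then argues, through a linear-independence and conic-combination contradiction, that such $c',c''$ exist in the non-degenerate intersection. By contrast, you bypass both the separating-hyperplane step and the two-point conic argument entirely: a single interior point $c^\circ$ chosen off the line $\spann(\pi_{j'}-\pi_{j''})$ lets you pick $\gamma$ directly from the orthogonal complement $(c^\circ)^\perp$, and the ball around $c^\circ$ supplies witnesses for $(iii)$ and $(iv)$ simultaneously. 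Your approach is more elementary and shorter; the paper's approach is slightly more structural in that it isolates a reusable geometric lemma, but at the cost of an extra layer of argument.
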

\begin{proof}
    We begin with an intermediate result. 
    
    \begin{lemma}\label{lem:choice1}
        If $\pi_j' \neq \pi_{j''}$; $\sigma_{j'}(i), \sigma_{j''}(i) > 0$ for all $i \in \mathcal{I}$; and there exist points $c',c'' \in \cap_{j\in\mathcal{J}} C_j$ such that $\cvh(c'',0) \cap \cvh(\pi_{j'} - \pi_{j''} \ , \ c' \ , \ 0 ) = \{ 0\}$; then there is a bilateral trading mechanism satisfying the conditions of \Cref{lem:bilateral}. 
    \end{lemma}
    \begin{proof}
        First, because $\sigma_{j'}(i),\sigma_{j''}(i) > 0$ for all $i$, if we can find $\gamma$ satisfying $ii. - iv.$ then we can scale it to satisfy $i.$

        Assume without loss of generality that there exists $i \in \mathcal{I}$ such that $\pi_{j'}^i - \pi_{j''}^i > 0$ (otherwise simply reverse the roles of $j'$ and $j''$). Then $0$ is an extreme point of $\cvh(\pi_{j'} - \pi_{j''} , c' , 0 )$. Thus $\cvh(\pi_{j'} - \pi_{j''} , c' , 0 ) \setminus \{0\}$ and $\cvh(c'',0)\setminus \{ 0\}$ are disjoint, non-empty, and convex. By the separating hyperplane theorem there exists $\gamma \in \R^I$ such that $(\pi_{j'} - \pi_{j''}) \cdot \gamma < 0$, $\gamma \cdot c' < 0$, and $\gamma \cdot c'' > 0$. Because $\min \left\{ \gamma \cdot c_{j'} \in C_{j'} \right\} \leq \gamma \cdot c'$ and $\max \left\{ \gamma \cdot c_{j''} \in C_{j''} \right\} \geq \gamma \cdot c''$, conditions $ii.-iv.$ of \Cref{lem:bilateral} are satisfied. 
    \end{proof}

    To prove \Cref{prop:bilateral_exist}, it remains to show that if $C_{j'} \cap C_{j''}$ is non-empty and non-degenerate then there exist points $c',c'' \in C_{j'} \cap C_{j''}$ satisfying the conditions of \Cref{lem:choice1}.
    
    Let $\delta = \pi_{j'} - \pi_{j''}$. Because $C_{j'} \cap C_{j''}$ is a non-degenerate hypercube, there are $c^1, c^2 \in C_{j'} \cap C_{j''}$ such that $c^1,c^2,\delta$ are pairwise linearly independent. It suffices to show that either $\cvh(c^1,0) \cap \cvh(\delta, c^2, 0) = \{0\}$ and/or $\cvh(c^2,0) \cap \cvh(\delta, c^1, 0) = \{0\}$.
    
    To see this, note that $\cvh(c^1,0) \cap \cvh(\delta, c^2, 0) \neq \{0\}$ if and only if $c^1$ is in the convex cone generated by $(\delta, c^2)$. That is, there exist $\alpha,\rho \geq 0$ such that $c^1 = \alpha c^2 + \rho \delta$. Moreover, $\alpha,\rho > 0$ because $c^1,c^2,\delta$ are pairwise linearly independent. Symmetrically $\cvh(c^2,0) \cap \cvh(\delta, c^1, 0) \neq \{0\}$ if and only if there exist $\hat\alpha,\hat\rho > 0$ such that $c^2 = \hat\alpha c^1 + \hat\rho \delta$. Suppose, towards a contradiction, that both are true. Then
    \begin{align*}
    c^1 &= \alpha(\hat{\alpha} c^1 + \hat{\rho} \delta ) + \rho\delta \\
    &= \alpha\hat{\alpha} c^1 + (\alpha\hat{\rho} + \rho) \delta 
    \end{align*}
    which contradicts the linear independence of $c^1$ and $\delta$.
\end{proof}

\Cref{prop:bilateral_exist} shows that bilateral trading mechanisms work to improve upon the status quo under mild conditions. The most restrictive condition of \Cref{prop:bilateral_exist} is that $\sigma_{j'}(i) > 0$ and $\sigma_{j''}(i) > 0$ for all $i \in \mathcal{I}$. However even this can be relaxed, at the cost of additional notation, by restricting attention only to the dimensions such that $\sigma_{j'}(i) > 0$ and $\sigma_{j''}(i) > 0$.  

We have established that there is, in general, value in using mechanisms which respond to agents' preferences. In particular, we have shown that gains can be obtained within a simple class of bilateral trading mechanisms. The next step towards quantifying the value of choice is to understand more generally which mechanisms can be used.   

\section{Obviously strategy-proof allocation}\label{sec:osp}

Throughout this section, assume $C_j = \R^I_+$ unless otherwise stated. In particular, we say that $j$'s preferences are \textit{strictly monotone} if $C_j = \R^I_{++}$.

We use $(\mathcal{T},s_{\mathcal{J}})$ and $(\hat{\mathcal{T}}, \hat{s}_{\mathcal{J}})$ to denote distinct trading mechanisms, where $s_{\mathcal{J}} = (s_j)_{j\in \mathcal{J}}$, $\hat{s}_{\mathcal{J}} = (\hat{s}_j)_{j\in \mathcal{J}}$, $\mathcal{T} = (\eta_j, ((T_j^k)_{k \in K_j})_{j \in \mathcal{J}}, P)$ and $\hat{\mathcal{T}} = (\hat{\eta}_j, ((\hat{T}_j^k)_{k \in \hat{K}_j})_{j \in \mathcal{J}}, \hat{P})$.

\begin{definition}
    A trading mechanism $(\mathcal{T}, s_{\mathcal{J}})$ is \textit{dominated} by trading mechanism $(\hat{\mathcal{T}}, \hat{s}_{\mathcal{J}})$ if for all $j \in \mathcal{J}$
    \begin{enumerate}
        \item $\eta_j = \hat{\eta}_j$;
        \item $K_j = \hat{K}_j$ and $s_j = \hat{s}_j$; and 
        \item $T_j^k \subset \hat{T}_j^{k}$ for all $k \in K_j$; and
    \end{enumerate}
\end{definition}
In other words, in the dominating mechanism the strategies and endowments are unchanged, but each trading set is expanded. Note that we place no restrictions on the designer's selection rule. In the dominating mechanism the designer has more flexibility in choosing the allocation. 

A trading protocol has redundancies if some trading sets $T_j^k$ contain allocations that are never chosen under $P$. Such redundancy is irrelevant from the points of view of both the designer and the agents. Given a trading protocol $\mathcal{T}$, we define it's \textit{essential reduction} to be the protocol with the same endowment and selection rule, but where each trading set $T_j^k$ is replaced by $\{ x \in T_j^k : x = P_j(k,k_{-j}) \text{ for some } k_{-j} \in K_{-j}\}$. We refer to the mechanism $(\tilde{\mathcal{T}},s_{\mathcal{J}})$ as the essential reduction of $(\mathcal{T},s_{\mathcal{J}})$ when $\tilde{\mathcal{T}}$ is the essential reduction of $\mathcal{T}$. We say that a mechanism is \textit{non-redundant} if it is equal to its essential reduction. We say that mechanism $(\mathcal{T},s_{\mathcal{J}})$ is \textit{effectively dominated} by $(\hat{\mathcal{T}},s_{\mathcal{J}})$ if the essential reduction of $(\mathcal{T},s_{\mathcal{J}})$ is dominated by $(\hat{\mathcal{T}},s_{\mathcal{J}})$.

Define the \textit{myopically-optimal choices} for agent $j$ as
\[
a_j^*(c_j|\mathcal{T}) := \argmin_{k \in K_j} \left\{ \min \left\{ c_j \cdot x \ : \ x \in \tilde{T}_j^k \right\} \right\}. 
\]
where $\tilde{T}_j^k$ is the essential reduction of $T_j^k$. In other words, if agent $j$ with preference $c_j$ could choose their allocation freely from $\cup_{k \in K_j}T_j^k$ (ignoring redundant allocations), it would be optimal to choose an $x \in T_j^{k}$ for $k \in a_j^*(c_j)$. We say that an agent's strategy is myopically optimal if they always make such choices. Then we have the following simple observation.

\begin{lemma}\label{lem:myopic}
    If a trading mechanism is OSP then agents' strategies are myopically optimal.
\end{lemma}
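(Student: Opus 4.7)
The plan is to unpack the definition of obvious dominance and translate it into a statement about the essential reductions $\tilde{T}_j^k$. The crucial observation is that, by the very definition of the essential reduction, for every $k \in K_j$ the set of attainable assignments as $k_{-j}$ varies, $\{P_j(k,k_{-j}) : k_{-j} \in K_{-j}\}$, is precisely $\tilde{T}_j^k$. Consequently, for any preference $c_j$,
\[
\inf_{k_{-j} \in K_{-j}} c_j \cdot P_j(k,k_{-j}) = \inf\{c_j \cdot x : x \in \tilde{T}_j^k\},
\]
and similarly the sup over $k_{-j}$ equals the sup of $c_j \cdot x$ over $\tilde{T}_j^k$.

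With this identification in hand, I would write the OSP condition at $(j, c_j)$ as: for every $k' \in K_j$,
\[
\sup\{c_j \cdot x : x \in \tilde{T}_j^{s_j(c_j)}\} \ \leq \ \inf\{c_j \cdot x : x \in \tilde{T}_j^{k'}\}.
\]
Since the sup over a non-empty set is at least as large as the inf over the same set, the left-hand side is bounded below by $\inf\{c_j \cdot x : x \in \tilde{T}_j^{s_j(c_j)}\}$. Chaining the two inequalities gives
\[
\inf\{c_j \cdot x : x \in \tilde{T}_j^{s_j(c_j)}\} \ \leq \ \inf\{c_j \cdot x : x \in \tilde{T}_j^{k'}\} \quad \forall \ k' \in K_j,
\]
which is exactly the statement that $s_j(c_j) \in a_j^*(c_j \mid \mathcal{T})$. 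Applying this at every $j \in \mathcal{J}$ and every $c_j \in C_j$ yields myopic optimality of $s_{\mathcal{J}}$.

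There is no real obstacle here; the lemma is essentially a direct consequence of stringing together the definitions. The only point worth being careful about is the first step --- verifying that the infimum/supremum of $c_j \cdot P_j(k,\cdot)$ over $k_{-j}$ coincides with the corresponding extremum of $c_j \cdot x$ over $\tilde{T}_j^k$ --- which is immediate from the construction of the essential reduction and does not even require the set to be closed, since only the inf and sup are used on both sides.
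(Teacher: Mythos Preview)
Your proposal is correct and follows essentially the same logic as the paper's proof. The paper argues by contradiction in two sentences, while you give the equivalent direct argument by explicitly identifying $\{P_j(k,k_{-j}):k_{-j}\in K_{-j}\}$ with $\tilde{T}_j^k$ and chaining the inequalities; the underlying idea is identical.
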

\begin{proof}
    Suppose not, then there exists $x \in \tilde{T}_j^k$ for $k \neq s_j(c_j)$ such that $c \cdot x < c \cdot x'$ for all $x' \in T_j^{s_j(c_j)}$. But then $s_j(c_j)$ does not obviously dominate $k$. 
\end{proof}

\Cref{lem:myopic} narrows the range of OSP mechanisms to those with myopically-optimal strategies, but tells us nothing about the form that the trading protocol should take. The following class of trading mechanisms turn out to play an essential role in this analysis. 

\begin{definition}
    The trading sets for agent $j$ are \textit{polarized rays} if
    \begin{enumerate}
        \item For every $k\in K_j$, the set $T_j^k$ is a non-monotone ray with endpoint $\eta_j$, i.e. $T_j^k = \{ x \in \R^I : (x - \eta_j) = y \kappa_j^k, \ y \in \R_+ \}$ for some $\kappa_j^k \in \R^I$ such that $\kappa_j^k \not \geq 0$ and $\kappa_j^k \not \leq 0$, and 
        \item For any $k',k'' \in K_j$, there exist scalars $y',y'' \geq 0$, with at least one strictly positive, such that $y'  \kappa_j^{k'} + y''  \kappa_j^{k''}  \geq 0$
    \end{enumerate}
\end{definition}

Intuitively, polarization of the rays defining an agent's trading sets means that these sets move the agent's allocation in different directions. One way to see this is that a necessary (but not sufficient) condition for polarization is that no two rays can ``take away'' the same task, i.e. the sets of negative coordinates for any two rays must be disjoint. \Cref{fig:polarized_rays} depicts an example. 

\begin{figure}[h]
    \centering
    \includegraphics[width=0.5\linewidth]{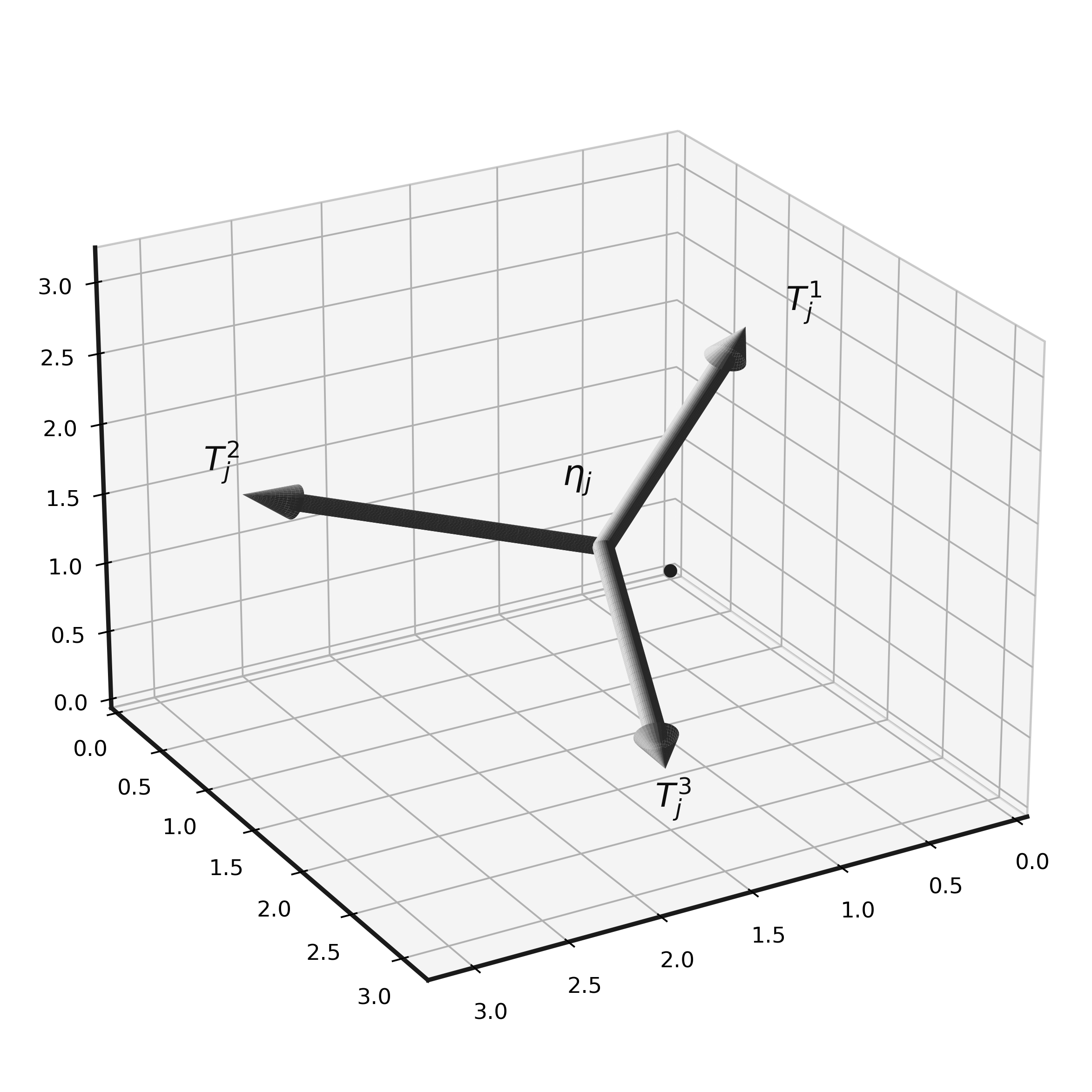}
    \caption{Polarized rays}
    \label{fig:polarized_rays}
\end{figure}

\begin{definition}
    A \textit{polarized ray mechanism} consists of a trading protocol $\mathcal{T}$ and a strategy profile $\{s_j: C_j \mapsto K_j\}_{j \in \mathcal{J}}$ such that for any agent $j$
    \begin{enumerate}
        \item either $j$'s trading sets are polarized rays, or $j$ receives two trading sets: $\{\eta_j\}$ and $\{x \in \R_+ \ : \ x \leq \eta_j\}$; and
        \item $j$'s strategy is myopically optimal, i.e. $s_j(c_j) \subset a^*_j(c_j)$ for all $c_j \in C_j$.
    \end{enumerate}
\end{definition}

\noindent\textbf{Examples.} From the examples introduced above
\begin{itemize}
    \item A bilateral trading mechanism is (trivially) a polarized ray mechanism. 
    \item A two-price binary-classification mechanism is a polarized ray mechanism if and only if $p_j^1 \leq p_j^2$ for all $j$.
\end{itemize}

\begin{theorem}\label{thm:osp_trade}
    Any polarized ray mechanism is OSP. If preferences are strictly monotone then any OSP trading mechanism is effectively dominated by a polarized ray mechanism. 
\end{theorem}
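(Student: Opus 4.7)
For ``any polarized ray mechanism is OSP,'' fix agent $j$ and type $c_j \in C_j \subseteq \R^I_+$. The key consequence of polarization, combined with $c_j \geq 0$, is that at most one ray direction $\kappa_j^{k^*}$ can satisfy $c_j \cdot \kappa_j^{k^*} < 0$: polarization produces scalars $y', y'' \geq 0$ (not both zero) with $y'\kappa_j^{k'} + y''\kappa_j^{k''} \geq 0$, and since each ray is non-monotone both scalars must be strictly positive, so pairing with $c_j \geq 0$ forces $y'(c_j \cdot \kappa_j^{k'}) + y''(c_j \cdot \kappa_j^{k''}) \geq 0$, ruling out both inner products being negative. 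If such a beneficial $k^*$ exists it is myopically optimal, and direct computation along each ray shows that its worst-case value equals $c_j \cdot \eta_j$ (attained at the endpoint) while every other ray's minimum is also $c_j \cdot \eta_j$, so obvious dominance holds with equality. If no beneficial ray exists, the endowment option $\{\eta_j\}$ (always among the trading sets by the trading-protocol definition) guarantees $\eta_j$ and trivially obviously dominates every ray. The alternative menu form ($\{\eta_j\}$ paired with $\{x : x \leq \eta_j\}$) is immediate.

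For the converse, fix an OSP trading mechanism with strictly monotone preferences ($C_j = \R^I_{++}$) and pass to its essential reduction so $\tilde T_j^k = \{P_j(k, k_{-j}) : k_{-j} \in K_{-j}\}$. The plan is to show (i) each $\tilde T_j^k$ lies on a single ray from $\eta_j$ (or else entirely below $\eta_j$ componentwise, which maps to the second polarized-ray form); (ii) the resulting ray directions $\{\kappa_j^k\}_k$ satisfy polarization; and (iii) extending each $\tilde T_j^k$ to the full non-monotone ray in its direction, while restricting the selection rule to the original essential values, yields a polarized ray mechanism that effectively dominates the original. For (i), suppose $x, x' \in \tilde T_j^k$ with $x - \eta_j$ and $x' - \eta_j$ not non-negative scalar multiples. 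The OSP comparison of $k$ with the endowment option forces $c_j \cdot (x - \eta_j) \leq 0$ and $c_j \cdot (x' - \eta_j) \leq 0$ for every $c_j \in S_k := s_j^{-1}(k)$, placing both vectors in the polar cone of $S_k$. A boundary argument --- on $\partial S_k$ the OSP inequality with the neighboring chosen action must bind --- pins this polar cone down to one dimension, forcing $x - \eta_j$ and $x' - \eta_j$ to be collinear; matching sign follows from observing that opposite signs would restrict $S_k$ to a measure-zero set, making $k$ redundant. For (ii), non-polarization of $\kappa_j^{k_1}, \kappa_j^{k_2}$ yields via a separation argument a $c_j \in \R^I_{++}$ with $c_j \cdot \kappa_j^{k_1} < 0$ and $c_j \cdot \kappa_j^{k_2} < 0$; at this type the worst-case value of any action is bounded below by $c_j \cdot \eta_j$ (attained at the ray's endpoint), while the best-case values of $k_1$ and $k_2$ are strictly below $c_j \cdot \eta_j$, so no action can obviously dominate both, contradicting OSP.

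The main obstacle is claim (i). OSP only provides pairwise comparisons between trading sets at specific types, so the argument must jointly exploit the covering of $\R^I_{++}$ by the partition $\{S_k\}_k$ and the OSP inequalities at the partition's internal boundaries. My plan is to approach this by duality: the $S_k$ are convex cones cut out by a minimal family of linear inequalities whose defining normals are precisely the ray directions $\kappa_j^k$. Once this dual description is established, both the ray structure of each $\tilde T_j^k$ in step (i) and the polarization of $\{\kappa_j^k\}_k$ needed in step (ii) fall out simultaneously, after which step (iii) is routine.
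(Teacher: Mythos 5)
Your forward direction and your step (ii) are essentially the paper's own arguments: polarization plus $c_j \geq 0$ implies at most one ray can be strictly beneficial, and the endowment point $\eta_j$, which lies in every trading set, serves as the pivot for the sup/inf comparison; the non-polarization contradiction in (ii) is the contrapositive of the paper's Farkas-based Claim 5 and is fine (your "separation" step can be justified exactly by that Farkas equivalence, combining the two one-sided certificates and perturbing into $\R^I_{++}$).

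The genuine gap is step (i), which you yourself flag as the main obstacle and only sketch. As written it rests on unestablished properties of the cells $S_k = s_j^{-1}(k)$: you assert they are convex cones cut out by linear inequalities whose normals are the ray directions, but nothing in OSP gives you convexity or a polyhedral description of $S_k$ a priori (myopic optimality only bounds $S_k$ by comparisons of concave functions of $c_j$), and taking the ray directions as the defining normals is circular, since the existence of those directions is what (i) is supposed to deliver. The "boundary argument" about binding inequalities with a neighboring action and the "measure-zero" sign-matching step are likewise not arguments yet. The paper avoids all of this with a purely pointwise argument that never touches the geometry of $S_k$: since $\eta_j$ is an essential outcome of every action (market clearing when all others demand their endowments), OSP implies (a) every non-chosen set lies weakly above $\eta_j$ in the chooser's valuation, and hence (b) no single trading set can contain one point valued strictly below $\eta_j$ and another valued strictly above it by the same type. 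Applying Farkas' lemma to the two-point condition (b) immediately yields the trichotomy: each set lies entirely above $\eta_j$, entirely below $\eta_j$, or inside a single non-monotone ray through $\eta_j$. Your proposal is also silent on two case-analysis points the paper needs: never-chosen "redundant" sets (everything above $\eta_j$), which under strict monotonicity can be discarded, and the fact that a nontrivial below-$\eta_j$ set cannot coexist with ray sets (the paper's Claim 6), which is what legitimizes mapping that case to the two-element remedial menu rather than mixing it into a ray menu. Until (i) is proved by some such argument and these cases are handled, the converse is not established.
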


\begin{proof}
Assume that we have a mechanism $(\mathcal{T}, a^*_{\mathcal{J}})$ that is OSP.  We begin with some intermediate claims. 

    \noindent\textit{Claim 1.} $\eta_j \in T_j^k$ for all $j \in \mathcal{J}$ and $k \in K_j$.
    
    \noindent\textit{Proof of Claim 1.} If every agent $j' \neq j$ chooses $\{\eta_{j'}\}$ then market clearing requires $j$ be assigned $\eta_j$. Thus it must be that $\eta_j \in T_j^k$ for all $k \in K_j$.  \hfill\qedsymbol

    \noindent\textit{Claim 2.} For any $c_j$ it must be that $c_j\cdot x \geq c_j \cdot \eta_j$ for all $k \neq s_j^*(c_j)$ and $x \in T_j^k$.

    \noindent\textit{Proof of Claim 2.} Suppose there exists $k \neq s_j^*(c_j)$ and $x \in T_j^k$ such that $c_j \cdot x < c_j \cdot \eta_j$. By Claim 1, $\eta_j \in T_j^{s_j(c_j)}$. But then $\eta_j$ could be chosen following action $s_j(c_j)$ while $x$ could be chosen for action $k$, so $s_j(c_j)$ does not obviously dominate $k$. \hfill\qedsymbol

    \noindent\textit{Claim 3.} If $x',x'' \in T^k$ then there cannot exist $c_j \in \R^I_+$ such that $c_j \cdot x' < c_j \cdot \eta_j$ and $c_j \cdot x'' > c_j \cdot \eta_j$. 

    \noindent\textit{Proof of Claim 3.} By Claim 2, if such a $c_j$ exists then $s_j(c_j) = k$, because $c_j \cdot x' < c_j \cdot \eta_j$. But because $x'' \in T_j^k$ and $c_j \cdot x'' > c_j \cdot \eta_j$, choosing $k$ does not obviously dominate choosing $\{\eta_j\}$.  \hfill\qedsymbol

    \noindent\textit{Claim 4.} For each $j \in \mathcal{J}$ and $k \in K_j$, one of the following holds 
    \begin{enumerate}
        \item $x \geq \eta_j$ for all $x \in T_j^k$.
        \item $x \leq \eta_j$ for all $x \in T_j^k$.
        \item $T_j^k$ is contained in a non-monotone ray originating at $\eta_j$, i.e. $T_j^k \subset \{ x \in \R^I : (x' - \eta_j) = y \kappa, \ y \in \R_+ \}$ for some $\kappa \in \R^I$ such that $\kappa \not \geq 0$ and $\kappa \not \leq 0$.
    \end{enumerate}

    \noindent\textit{Proof of Claim 4.} If either $x'$ or $x''$ are equal to $\eta_j$ then we are done, so assume that this is not the case. Consider the condition from Claim 3. By Farkas' lemma, the following are equivalent
    \begin{enumerate}
        \item there does not exist $c_j \in \R^I_+$ such that $c_j \cdot x' < c_j \cdot \eta_j$ and $c_j \cdot x'' > c_j \cdot \eta_j$. 
        \item There exist $y',y'' \in \R_+$, with at least one strictly positive, such that $y' (x' - \eta_j) \geq y'' (x'' - \eta_j)$. 
    \end{enumerate}
    So Claim 3 implies that the second condition above holds. Reversing the roles of $x'$ and $x''$ we also have that there exist $\hat{y}',\hat{y}'' \in \R_+$, with at least one strictly positive, such that $\hat{y}' (x' - \eta_j) \leq \hat{y}'' (x'' - \eta_j)$.

    Suppose $y' > 0$ (a symmetric argument applies if $y'' > 0$). Then $x' - \eta_j \geq \frac{y''}{y'}(x'' - \eta_j)$. Then we have $\hat{y}''(x'' - \eta_j) \geq \hat{y}'\frac{y''}{y'}(x'' - \eta_j)$. This implies that one of the following holds
    \begin{enumerate}
        \item $x'' - \eta_j \geq 0$ and $\hat{y}'' \geq \hat{y}'\frac{y''}{y'}$
        \item $x'' - \eta_j \leq 0$ and $\hat{y}'' \leq \hat{y}'\frac{y''}{y'}$.
        \item $x'' - \eta_j \not\geq 0$, $x'' - \eta_j \not \leq 0$ and $\hat{y}'' = \hat{y}'\frac{y''}{y'}$. 
    \end{enumerate}
   If we are in the first case, with  $x'' - \eta_j \geq 0$ and $\hat{y}'' \geq \hat{y}'\frac{y''}{y'}$, then $x' - \eta_j \geq 0$ as well. Because $x'$ and $x''$ are arbitrary selections from $T_j^k$, we have that $x \geq 0$ for all $(x - \eta_j) \in T_j^k$. Similarly in the second case, with $x'' - \eta_j \leq 0$ and $\hat{y}'' \leq \hat{y}'\frac{y''}{y'}$, we have that $(x - \eta_j) \leq 0$ for all $x \in T_j^k$. 

   Consider the third case, where $x'' - \eta_j \not\geq 0$, $x'' - \eta_j \not \leq 0$ and $\hat{y}'' = \hat{y}'\frac{y''}{y'}$. Note that if this holds then $y'', \hat{y}',\hat{y}''$ must be strictly positive (and we are still assuming $y' > 0$), so $\frac{\hat{y}''}{\hat{y}'} = \frac{y''}{y'}$. Then we have 
   \[
    \frac{y''}{y'}(x'' - \eta_j) \geq (x' - \eta_j) \quad \text{and} \quad  \frac{y''}{y'}(x'' - \eta_j) \leq (x' - \eta_j)
   \]
   so $ \frac{y''}{y'}(x'' - \eta_j) = (x' - \eta_j)$. In other words, $(x'' - \eta)$ and $(x' - \eta)$ lie in the same ray. Then $T_j^k$ is contained in a ray originating at $\eta_j$.\hfill\qedsymbol

   We call a set $T_j^k$ \textit{remedial} if $x \leq \eta_j$ for all $x \in T_j^k$, and \textit{redundant} if $x \geq \eta_j$ for all $x \in T_j^k$.

   \noindent\textit{Claim 5.} If $T_j^{k'}$ and $T_j^{k''}$ are not remedial then they are polarized.

   \noindent\textit{Proof of Claim 5.} If one of $T_j^{k'}$ and $T_j^{k''}$ is redundant then they are trivially polarized, so suppose that neither is redundant. Then by Claim 4, both $T_j^{k'}$ and $T_j^{k''}$ are contained in rays from $\eta_j$. Thus it suffices to show that there exists $x' \in T_j^{k'}$, $x'' \in T_j^{k''}$, and $y',y'' \geq 0$, with at least one strictly positive, such that $y'x' + y''x'' \in \R^I_+$. 

   Let $x' \in T_j^{k'}$, $x'' \in T_j^{k''}$ be any selections such that $x' \neq \eta_j$, $x'' \neq \eta_j$. Observe that there cannot exist $c_j \geq 0$ such that $c_j \cdot x' \leq c_j \eta_j$ and $c_j \cdot x'' < c_j \eta_j$. If this were the case then, because $x' - \eta_j \not \geq 0$, we could find $c_j'$ such that $c_j' x' < c_j' \eta_j$ and $c_j' x'' < c_j' \cdot \eta_j$, violating Claim 2. By Farkas' lemma, the following are equivalent
   \begin{enumerate}[i.]
       \item There does not exist $c_j \geq 0$ such that $c_j \cdot x' \leq c_j \eta_j$ and $c_j \cdot x'' < c_j \eta_j$.
       \item There exist $y' \geq 0, y'' > 0$, such that $y'(x' - \eta_j) + y'' (x'' - \eta_j) \geq 0$.
   \end{enumerate}
   The second condition is precisely what we wanted to show. \hfill\qedsymbol

   \noindent\textit{Claim 6.} If $T_j^k$ is remedial and $T_j^k \neq \{\eta_j\}$ then $T_j^{k'}$ is remedial or redundant for all $k' \in K_j$.

   \noindent\textit{Proof of Claim 6.} If $T_j^{k'}$ is not remedial or redundant then it is contained in a non-monotone ray from $\eta_j$, by Claim 4. Then there exists $c_j \geq 0$, $c_j \neq 0$, and $x' \in T_j^{k'}$ such that $c_j \cdot x' < c_j \cdot \eta_j$. Then for any $x \in T_j^k$, $x \neq \eta_j$ we can also choose $c_j$ so that $c_j \cdot x < c_j \cdot \eta_j$. But this violates Claim 2. \hfill\qedsymbol

   Say that agent $j$'s mechanism is \textit{remedial} if $T_j^k$ is remedial for all $k \in K_j$ such that $T_j^k \neq \{\eta_j\}$. 

   We have concluded that for any $j \in \mathcal{J}$, the collection of trade sets for $j$ takes one of the following two forms. 
   \begin{enumerate}
       \item $T_j^k$ is remedial or redundant for all $k \in T_j^k$. 
       \item $(T_j^k)_{k \in K_j}$ can be partitioned into a sets contained in a set of polarized rays from $\eta_j$, and a set of redundant sets. 
   \end{enumerate}
   Moreover, if preferences are strictly monotone then no redundant set can ever be chosen (doing so is obviously dominated by choosing $\{\eta_j\}$), and so we obtain an equivalent mechanism by removing the redundant sets. Then the mechanism dominated by a polarized ray mechanism. 

    It remains to show that any polarized ray mechanism is OSP. Any $j$ with a remedial mechanism has a trivial dominant strategy. Suppose $j$'s mechanism consists of polarized rays. Then (as shown using Farkas' lemma in the proof of Claim 5) if $c_j \cdot x' \leq c_j \cdot \eta_j$ for some $x' \in T_j^{k'}$ then $c_j \cdot x'' \geq c_j \cdot \eta_j$ for all $x'' \in T_j^{k''}$, $k'' \neq k'$. Thus choosing $k'$ is obviously dominant. 
\end{proof}

\Cref{thm:osp_trade} significantly reduces the dimensionality of the space of mechanisms over which the designer needs to search. In particular, the characterization implies that $|K_j| \leq I+1$ for all $j$.

\begin{corollary}\label{cor:polarizing_dimension}
Any collection of polarized rays in $\R^I$ has at most $I$ members. Thus $|K_j| \leq I + 1$ in any non-redundant OSP mechanism.   
\end{corollary}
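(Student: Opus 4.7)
The plan is to prove the combinatorial statement --- that a collection of polarized rays in $\R^I$ contains at most $I$ members --- from which the menu-size bound $|K_j| \le I+1$ follows via \Cref{thm:osp_trade}, since a non-redundant OSP mechanism is (after effective reduction) either remedial with $|K_j|=2$ or built from at most $I$ polarized rays together possibly with an opt-out singleton $\{\eta_j\}$. So the real content is the combinatorial bound on the number of rays.

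For each polarized ray in the collection, with direction $\kappa^k \in \R^I$, I would track its \emph{negative support}
\[
N^k := \{ i \in \mathcal{I} : (\kappa^k)_i < 0 \}.
\]
The non-monotonicity condition $\kappa^k \not\geq 0$ immediately gives $N^k \neq \varnothing$, so it suffices to show that the $N^k$'s are pairwise disjoint: they are then non-empty pairwise disjoint subsets of $\mathcal{I}$, and there can be at most $I$ of them.

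To prove disjointness, I would first observe that in the polarization condition the weights $(y',y'')$ can be taken strictly positive: if, say, $y''=0$, then $y'\kappa^{k'}\ge 0$ with $y'>0$ forces $\kappa^{k'}\ge 0$, contradicting non-monotonicity. Now suppose, toward contradiction, that some coordinate $i$ lies in $N^{k'} \cap N^{k''}$ for distinct $k', k''$. Then for every $y',y''>0$ the $i$-th coordinate of $y'\kappa^{k'} + y''\kappa^{k''}$ equals $y'(\kappa^{k'})_i + y''(\kappa^{k''})_i < 0$, so no strictly positive combination can be coordinate-wise non-negative, violating polarization. The only subtlety is ruling out the degenerate case where one weight vanishes, which is immediate from non-monotonicity; once the invariant $N^k$ is identified, the rest is pure bookkeeping and there is no substantive obstacle.
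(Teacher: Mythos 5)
Your proof is correct and follows essentially the same route as the paper: each polarized ray direction must have a strictly negative coordinate, polarization forces these negative coordinates (your sets $N^k$) to be disjoint across rays, and pigeonhole gives at most $I$ rays, hence $|K_j|\le I+1$. Your explicit handling of the degenerate case where one weight vanishes is a minor elaboration of the same argument the paper uses.
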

\begin{proof}
    Let $(\kappa_j^k)_{k \in K_j}$ be the vectors defining the polarized rays for $j$. By definition $\kappa_j^{k'} \not \geq 0$ for any $k' \in K_j$, so it must have at least one strictly negative coordinate. Moreover, if $\kappa_j^{k'}(i) < 0$ then $\kappa_j^{k''}(i) \geq 0$, because otherwise $y'  \kappa_j^{k'} + y''  \kappa_j^{k''}  \not\geq 0$ for any $y',y''\geq 0$ with at least one strictly positive. Thus there can be at most $I$ such vectors. 
\end{proof}

Another implication of \Cref{thm:osp_trade} is that the selection rule is irrelevant for determining whether trading mechanism $(\mathcal{T},s_{\mathcal{J}})$ is OSP, aside from determining the mechanism's essential reduction. 

\begin{corollary}\label{cor:selectionrule}
    Assume preferences are strictly monotone and $(\mathcal{T},s_{\mathcal{J}})$ is OSP. Let $(\hat{\mathcal{T}},s_{\mathcal{J}})$ be the polarized ray mechanism that essentially dominates $(\mathcal{T},s_{\mathcal{J}})$ (which exists by \Cref{thm:osp_trade}). Then replacing the selection rule in $(\hat{\mathcal{T}},s_{\mathcal{J}})$ with any alternative rule yields a mechanism that is also OSP. 
\end{corollary}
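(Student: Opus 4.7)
The plan is to reuse the structure of the OSP proof for polarized ray mechanisms in \Cref{thm:osp_trade}. That argument establishes obvious dominance by bounding the relevant sup and inf purely in terms of the geometry of the trading sets, together with the myopic optimality of $s_{\mathcal{J}}$; the particular selection rule is never invoked. Since any alternative selection rule must still map into the same (unchanged) trading sets, the same bounds continue to apply, and OSP is preserved.

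More concretely, fix an agent $j$, type $c_j$, the chosen action $k' = s_j(c_j)$, and an arbitrary alternative $k'' \in K_j$. Let $\hat{\mathcal{T}}'$ denote $\hat{\mathcal{T}}$ with an arbitrary (feasible) selection rule $\hat{P}'$ substituted for the original one. By the definition of a trading protocol, $\hat{P}'_j(k, k_{-j}) \in \hat{T}_j^k$ for every profile $(k, k_{-j})$, and hence
\[
\sup_{k_{-j}} c_j \cdot \hat{P}'_j(k', k_{-j}) \leq \sup_{x \in \hat{T}_j^{k'}} c_j \cdot x, \qquad \inf_{k_{-j}} c_j \cdot \hat{P}'_j(k'', k_{-j}) \geq \inf_{x \in \hat{T}_j^{k''}} c_j \cdot x.
\]
The Farkas-lemma argument in Claim 5 of the proof of \Cref{thm:osp_trade}, together with the treatment of remedial sets, shows that for a polarized ray (or remedial) collection of trading sets paired with a myopically optimal strategy,
\[
\sup_{x \in \hat{T}_j^{k'}} c_j \cdot x \;\leq\; c_j \cdot \eta_j \;\leq\; \inf_{x \in \hat{T}_j^{k''}} c_j \cdot x.
\]
Chaining the two displays yields the defining inequality for obvious dominance of $k'$ over $k''$ under $\hat{P}'$, so $(\hat{\mathcal{T}}', s_{\mathcal{J}})$ is OSP.

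The main thing to verify --- and really the only substantive content of the corollary --- is that the Farkas-derived bounds appearing in \Cref{thm:osp_trade} depend solely on the structural features of a polarized ray mechanism (the ray directions $\kappa_j^k$, the polarization condition, and the myopic optimality of $s_{\mathcal{J}}$), none of which are altered by swapping the selection rule. Once that is granted, no further argument is needed: the obvious-dominance condition for $(\hat{\mathcal{T}}', s_{\mathcal{J}})$ is a consequence of trading-set-level inequalities that hold uniformly across all feasible selection rules. This observation also explains the practical remark made after \Cref{thm:osp_trade}: the designer is free to optimize over selection rules without having to revisit the agents' incentive constraints.
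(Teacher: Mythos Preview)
Your proposal is correct and is essentially the paper's own (implicit) argument: the corollary is stated without proof because it is immediate from the first sentence of \Cref{thm:osp_trade} (``Any polarized ray mechanism is OSP''), since replacing the selection rule leaves the trading sets and the myopically optimal strategy unchanged, and those are all that enter the definition of a polarized ray mechanism. Your elaboration---bounding the relevant sup and inf by $c_j\cdot\eta_j$ using the ray geometry and the polarization/Farkas step---just makes explicit the final paragraph of the proof of \Cref{thm:osp_trade}; the only minor imprecision is that the inequality you display is established in that final paragraph rather than in Claim~5 itself (Claim~5 runs the Farkas argument in the converse direction).
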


In light of \Cref{cor:selectionrule}, it is thus without loss of optimality for the designer to use an ex-post-optimal allocation rule of the form 
\[
P(k_{\mathcal{J}}) \in \argmin_{\mu \in \Delta(\mathcal{Z})} \sum_{j \in \mathcal{J}} \pi_j \cdot \mu_j \quad s.t. \quad \mu_j \in T_j^{k_j} \ \ \forall \ j \in \mathcal{J}.
\]

We call a polarized ray mechanism with this selection rule an \textit{ex-post optimal polarized ray} (EOPR) mechanism. Together, the previous observations greatly simplify the problem of choosing a trading mechanism.

\subsection{Robustness}
    Consider any EOPR mechanism with $\eta_j = \sigma_j$. Because, $\eta_j \in T_j^k$ for all $j \in \mathcal{J}$ and $k \in K_j$, the designer always has the option to select the status-quo allocation, regardless of the agents' actions. Thus the mechanism ensures that the expected social cost is no larger than under the status quo. In particular, the mechanism ensures gains in expectation regardless of the distribution over agents' preferences, and is thus robust to misspecification in this dimension. Indeed, this conclusion does not even rely on the agents' preferences being linear.\footnote{If preferences are non-linear then the mechanism may no longer be OSP. However, the designer is still guaranteed a (weak) improvement in any equilibrium of the induced game.} Any EOPR mechanism is thus a reasonable candidate for use in practice. Nonetheless, if the designer has some knowledge of the type distribution they may want to use it to try to select among EOPR mechanisms. 

\section{Towards optimal trading mechanisms}\label{sec:lm}

By \Cref{thm:osp_trade} and \Cref{cor:selectionrule}, the designer can without loss of optimality restrict attention to EOPR mechanisms, i.e. polarized ray mechanisms with the selection rule 
\[
P(k_{\mathcal{J}}) \in \argmin_{\mu \in \Delta(\mathcal{Z})} \sum_{j \in \mathcal{J}} \pi_j \cdot \mu_j \quad s.t. \quad \mu_j \in T_j^{k_j} \ \ \forall \ j \in \mathcal{J}.
\]
Despite the significant dimension reduction, the designer's objective remains somewhat complicated. For a given mechanism and preference profile, OSP renders the agents' best-response problem trivial. However, to determine the allocation the designer needs to solve an additional linear program, which links together the mechanisms of different agents in a way that is not entirely transparent. 

We can simplify this program somewhat by studying a ``large-market'' relaxation. In this relaxation certain parts of the designer's program become separable across agents. A solution to the relaxed program can then be taken as a starting point in the search for solutions to the original program; we show that the solution to the relaxed program is approximately optimal in the original program for large markets. It is not in general possible to solve the relaxed program analytically, and its computational complexity remains an open question. Nonetheless, the parallelization enabled by the large-market relaxation does reduce the computational burden. Additionally, in a special case with two types of tasks the program admits an analytical solution.

\subsection{The large-market model}

In the designer's true program every task must be assigned to a agent for every realized preference profile. Suppose instead that we require only that every task be assigned in expectation, taken over the preference profiles. That is, for a mechanism $m$ we replace the market-clearing condition from \cref{prog:1} with 
\begin{equation}\label{eq:mc_lm}
\E\left[ \sum_{j \in \mathcal{J}}  m_j(i|\mathbf{c}) \right] =  n^i \quad \forall \ i \in \mathcal{I}.
\end{equation}
One interpretation of this relaxation is that each agent is actually a unit mass population of agents with identical performance, and $F_j$ describes the distribution of preferences in this population. We therefore refer to this relaxation as the large-market market-clearing (LM-MC) constraint. 

Fix an endowment $\eta \in \Delta(\mathcal{Z})$. Let $\{T_j^k\}_{k \in K_j}$ be a collection of trading sets for each $j$, with endowment $\eta_j$, and let $s_j: C_j \mapsto K_j$ be a myopically-optimal strategy for $j$. Let $(P_j:K_j \mapsto \R^I_+)_{j \in \mathcal{J}}$ be a collection of \textit{individual-level selection rules} for the designer, where we require that $P_j(k) \in T_j^k$, but do not impose that together $(P_j)_{j\in \mathcal{J}}$ satisfy ex-post market clearing. Then the indirect mechanism in which each agent chooses from $K_j$ according to strategy $s_j$, and the designer chooses from $T_j^{s_j(c_j)}$ according to $P_j$, satisfies the LM-MC constraint if and only if 
\[
\sum_{j \in \mathcal{J}} E_j[P_j(s_j(c_j))] = (n_1, \dots, n_I).
\]
Conversely, if we replace the individual-level selections rules $(P_j)_{j\in \mathcal{J}}$ with a (legitimate) selection rule $P: \prod_{j\in \mathcal{J}} K_j \mapsto \Delta(Z)$, which satisfies ex-post market clearing, then LM-MC is satisfied a fortiori. Thus the following program is an upper bound on the value of choosing the optimal OSP trading mechanism with endowment $\eta$:
\begin{align}
    \min_{(\{T_j^k\}_{k \in K_j}, P_j)_{j \in \mathcal{J}}}  \ \ &\sum_{i=1}^I\sum_{j=1}^J \pi_j(i) \ \E\left[ m_j(i | \mathbf{c}) \right] \label{prog:2} \\
    s.t. \quad & \text{either $j$'s trading sets are polarized rays, or $j$ receives} \notag \\ 
    &\quad \text{ two trading sets: $\{\eta_j\}$ and $\{x \in \R_+ \ : \ x \leq \eta_j\}$; and} \tag{OSP} \\
    &\sum_{j \in \mathcal{J}} E_j[P_j(s_j(c_j))] = (n_1, \dots, n_I).
    \tag{LM-MC}
\end{align}
We can relax this program further by allowing the designer to randomize over the trading sets and selection rule for each $j$. Under the population interpretation of the large-market program, randomization is equivalent to splitting each ``agent'' into subsets that receive different mechanisms. Therefore to economize on notation we don't explicitly model randomization in program (\cref{prog:2}), as it can be represented by redefining $\mathcal{J}$.\footnote{Technically, this notational trick is only valid if the randomization involves rational weights, but this should not cause confusion.} We refer to program in (\ref{prog:2}) (allowing for randomization) as the \textit{large-market OSP program}.

\subsection{Solving the large-market OSP program}

It is convenient to separate out the parts of the program that depend only on $\E_j[P_j(s_j(c_j))]$. Define the \textit{OSP-feasible set} for $j$ by 
\begin{align*}
\mathcal{F}_j &:= \cvh\Big\{ \E_j[P_j(s_j(c_j))] \in \R^I_+ \ : \  P_j:K_j \rightarrow \R^I_+; \text{ and either $j$'s trading sets are polarized rays,} \notag \\ 
    &\quad \text{ or $j$ receives two trading sets: $\{\eta_j\}$ and $\{x \in \R_+ \ : \ x \leq \eta_j\}$}\Big\}.
\end{align*}
where $\cvh$ denotes the convex hull. In other words, $\mathcal{F}_j$ is the set of expected allocations for $j$ that can be induced using a (random) trading protocol and selection rule satisfying all the conditions of program (\ref{prog:2}) aside from (LM-MC). Then the large-market OSP program has the same value as
\begin{align}
    \min_{(x_j)_{j\in \mathcal{J}}} & \   \sum_{j\in \mathcal{J}} \pi_j \cdot x_j \label{prog:outer} \\
    s.t. \quad & x_j \in \mathcal{F}_j \quad \forall \ j \in \mathcal{J} \notag \\
    & \sum_{j \in \mathcal{J}} x_j(i) =  n^i \quad \forall \ i \in \mathcal{I} \tag{LM-MC}
\end{align}
I refer to \cref{prog:outer} as the \textit{outer program}. It is instructive to study the dual of this program, which can be written as  
\begin{equation}\label{prog:dual}
\sup_{\lambda \in \R^I} \sum_{i \in \mathcal{I}} \lambda^in^i - \sum_{j \in \mathcal{J}} S_j(\lambda - \pi_j )
\end{equation}
where $S_j(w) := \max \{ w \cdot y \ : \ y \in \mathcal{F}_j \}$
is the support function of the convex set $\mathcal{F}_j$. Strong duality holds, meaning that the values of the outer program in \cref{prog:outer} and its dual in \cref{prog:dual} coincide. 

\begin{proposition}\label{prop:duality}
     Strong duality for the outer program holds. Moreover, if $\lambda_*$ is a solution to the program in \cref{prog:dual} then there exist selections 
     \[
        x_j \in \argmax \{ (\lambda_* - \pi_j) \cdot y \ : \ y \in \mathcal{F}_j \}
     \]
    such that $(x_j)_{j\in \mathcal{J}}$ satisfy market clearing, and these constitute a solution to the program in \cref{prog:outer}. 
\end{proposition}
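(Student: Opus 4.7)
The plan is to apply standard Lagrangian duality for convex programs. First I would verify that \cref{prog:outer} is a convex program (each $\mathcal{F}_j$ is a convex hull by construction, the objective is linear, and (LM-MC) is affine) with a non-empty compact feasible set: feasibility follows because $\eta_j \in \mathcal{F}_j$ for every $j$ (any agent can be offered the singleton menu $\{\eta_j\}$, yielding expected allocation $\eta_j$) together with $\sum_j \eta_j = n$, while compactness follows because non-negativity combined with (LM-MC) forces each $x_j(i) \in [0, n^i]$. Hence $p^*$ is attained at some $x^*$. Dualizing (LM-MC) with multiplier $\lambda \in \R^I$ gives the Lagrangian
\[
L(x,\lambda) = \lambda \cdot n + \sum_{j \in \mathcal{J}}(\pi_j - \lambda) \cdot x_j,
\]
whose pointwise minimization over $x_j \in \mathcal{F}_j$ yields the dual objective $\lambda \cdot n - \sum_j S_j(\lambda - \pi_j)$, recovering \cref{prog:dual}. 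Weak duality $p^* \geq d^*$ is immediate.

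For strong duality, the key step is that $x^*$ minimizes a linear functional over the convex feasible set, so a supporting-hyperplane argument (equivalently, the standard duality theorem of Rockafellar for convex programs with affine equality constraints) produces $\lambda^* \in \R^I$ such that $x_j^* \in \argmax_{y \in \mathcal{F}_j}(\lambda^* - \pi_j) \cdot y$ for every $j$. Plugging this into the dual objective gives
\[
\lambda^* \cdot n - \sum_j S_j(\lambda^* - \pi_j) = \lambda^* \cdot n - \sum_j (\lambda^* - \pi_j) \cdot x_j^* = \sum_j \pi_j \cdot x_j^* = p^*,
\]
using $\sum_j x_j^* = n$. Hence $d^* \geq p^*$ and, combined with weak duality, $p^* = d^*$, with the supremum attained at $\lambda^*$.

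For the second claim, let $\lambda^*$ solve the dual and $x^*$ be any primal optimum. Strong duality and (LM-MC) together give
\[
\sum_j S_j(\lambda^* - \pi_j) = \lambda^* \cdot n - p^* = \sum_j(\lambda^* - \pi_j) \cdot x_j^*,
\]
and since $(\lambda^* - \pi_j) \cdot x_j^* \leq S_j(\lambda^* - \pi_j)$ for every $j$ by definition of the support function, equality of the sums forces equality term-by-term: $x_j^* \in \argmax\{(\lambda^* - \pi_j) \cdot y : y \in \mathcal{F}_j\}$ for all $j$. This produces the required selections, and conversely any such collection of argmax selections satisfying (LM-MC) is primal-optimal by the same chain of equalities.

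The main obstacle is securing the Lagrange multiplier $\lambda^*$ underlying strong duality: the standard constraint qualification requires $n \in \operatorname{ri}(\sum_j \mathcal{F}_j)$, which must be verified given that $\mathcal{F}_j$ may fail to be closed and that $\eta_j$ may sit on its boundary. I would handle this by passing to the closure of each $\mathcal{F}_j$ (which leaves both $p^*$ and $d^*$ unchanged by continuity of the linear objective) and restricting attention to the affine hull of the Minkowski sum, where the relative-interior condition reduces to a straightforward check using the fact that the endowment profile is feasible.
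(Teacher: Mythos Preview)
Your overall strategy matches the paper's: dualize the affine market-clearing constraint, identify the dual objective with the support functions $S_j$, and then argue that no duality gap arises. Your treatment of the ``Moreover'' clause (forcing term-by-term equality from the aggregate identity $\sum_j S_j(\lambda^*-\pi_j)=\sum_j(\lambda^*-\pi_j)\cdot x_j^*$) is in fact more explicit than the paper's proof, which establishes only $p^*=d^*$ and leaves the complementary-slackness step implicit.

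The one substantive divergence is in how the constraint qualification is handled. The paper does \emph{not} verify a relative-interior condition; instead it enlarges each $\mathcal{F}_j$ to an $\varepsilon$-neighborhood $\mathcal{F}_j^\varepsilon$, observes that Slater's condition then holds trivially (since $\eta_j$ becomes an interior point), and passes to the limit using Berge's maximum theorem on the primal side and monotonicity of $\varepsilon\mapsto S_j^\varepsilon$ on the dual side. Your route---invoke Rockafellar's duality theorem after checking $n\in\operatorname{ri}\bigl(\sum_j\overline{\mathcal{F}_j}\bigr)$---is legitimate in principle, but the sentence ``the relative-interior condition reduces to a straightforward check using the fact that the endowment profile is feasible'' is where the real work hides. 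Feasibility of $(\eta_j)_j$ gives only $n\in\sum_j\mathcal{F}_j$; to get $n$ into the \emph{relative interior} you need $\eta_j\in\operatorname{ri}(\mathcal{F}_j)$ for each $j$ (using $\operatorname{ri}(\sum_j\mathcal{F}_j)=\sum_j\operatorname{ri}(\mathcal{F}_j)$), and that is not immediate---it requires showing that OSP-feasible expected allocations can move from $\eta_j$ in every direction within the affine hull, including strictly positive ones. This can be argued (e.g.\ by combining two polarized rays whose sum is strictly positive), but it is a genuine lemma, not a one-line check. The paper's perturbation argument is cruder but sidesteps this entirely, which is what it buys; your approach, if completed, would give dual attainment for free and is arguably cleaner, but you should not label the missing step as straightforward.
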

\begin{proof}
    Proof in \Cref{proof:duality}.
\end{proof}

Given the support functions $(S_j)_{j \in \mathcal{J}}$ the dual program in \cref{prog:dual} is a simple $I$-dimensional convex minimization program. The only difficulty lies in characterizing the support functions. Writing the program defining these functions more explicitly, we have 
\begin{align*}
    S_j(w) &= \max_{\{T_j^k\}_{k\in K_j}, P_j, s_j} \ \ w \cdot\E_j[P_j(s_j(c_j))] \\
    s.t. \quad & s_j \text{ is myopically optimal;} \\
    & P_j: K_j \rightarrow \R^I_+ \text{ ; and} \\
    & \text{either $j$'s trading sets are polarized rays, or $j$ receives} \notag \\ 
    &\quad \text{ two trading sets: $\{\eta_j\}$ and $\{x \in \R_+ \ : \ x \leq \eta_j\}$}.
\end{align*}
Ultimately this program needs to be solved computationally. The fact that the support function can be characterized in parallel for each agent is of some assistance, but the problem remains challenging. Still, we can simplify the program somewhat with a few observations. First, note that the option of giving $j$ the remedial menu, i.e. the two trading sets $\{\eta_j\}$ and $\{x \in \R_+ \ : \ x \leq \eta_j\}$ is only really valuable if $w \leq 0$, in which case $S_j(w) = 0$.\footnote{Formally, if $w \not\leq 0$ then $\max \{w \cdot x : x \leq \eta_j \}$ is solved by setting $x^*(i) = \eta_j(i)$ if $w(i) \geq 0$, and $x^*(i) = 0$ otherwise. Then let $\kappa_j = x^* - \eta_j$. The trading protocol with sets $\{\{\eta_j\}, \{y \kappa_j + \eta_j, y \in \R_+$\}\} yields the same outcomes as the remedial menu. These are not quite polarized rays, as $\kappa_j \leq 0$. However, we can make it a polarized ray mechanism adding $\varepsilon > 0$ to $\kappa_j(i)$ if $w^*(i) > 0$. If $j$'s type is continuously distributed then this trading mechanism yields approximately the same outcomes. Alternatively, we can extend the definition of polarized ray mechanisms to allow for $\kappa_j(i) \leq 0$. Or, we could simply solve for the optimal polarized ray mechanism and compare it to the value of $\max \{w \cdot x : x \leq \eta_j \}$.} Thus the only difficult case is $w \not\leq 0$, where we solve 
\begin{align*}
S_j(w) &= \max_{\{T_j^k\}_{k\in K_j}, P_j, s_j} \ \ w \cdot\E_j[P_j(s_j(c_j))] \\
    s.t. \quad & s_j \text{ is myopically optimal;} \\
    & P_j: K_j \rightarrow \R^I_+ \text{ ; and} \\
    & \text{$j$'s trading sets are polarized rays}
\end{align*}
Now consider the constraint that $j$'s trading sets must be polarized rays. We have already observed that there are at most $I$ such rays (\Cref{cor:polarizing_dimension}). Let $\kappa_j^k$ be the vector defining the ray $T_j^k$ (with $\kappa_j^k = 0$ for $T_j^k = \{\eta_j\}$). The the polarized-ray constraint is
\[
\cvh\big\{ \kappa_j^{k'},\kappa_j^{k''}\big\}\cap\R^I_+ \neq \varnothing \quad \forall \ k',k'' \in K_j.
\]
We can also re-write the polarizing ray constraint by applying Farkas' Lemma.
\begin{lemma}\label{lem:polarized_alt}
    The rays $\{T_j^k\}_{j \in K_j}$ defined by vectors $\{\kappa_j^k\}_{k \in K_j}$ are polarized if and only if 
    \[
    \min_{c_j \in R^I_+} \left\{ c_j \cdot \kappa_j^{k'} \ : \ c_j \cdot \kappa_j^{k''} \leq 0  \right\} \geq 0 \quad \forall \ k' \neq k'' \text{ with } T_j^{k''} \neq \{\eta_j\}.
    \]
\end{lemma}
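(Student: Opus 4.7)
The plan is to recognize each displayed inequality as a theorem-of-alternatives statement dual to the pairwise polarization condition, and to apply Farkas' lemma (in the form already used in Claim~5 of the proof of \Cref{thm:osp_trade}) separately for each pair $k', k''$.

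First I would fix $k' \neq k''$ with $T_j^{k''} \neq \{\eta_j\}$. Since $c_j = 0$ is feasible for the inner program with objective value $0$, the condition $\min \geq 0$ in the display is equivalent to the infeasibility of the system $c_j \geq 0$, $c_j \cdot \kappa_j^{k''} \leq 0$, $c_j \cdot \kappa_j^{k'} < 0$. Farkas' lemma with a single strict inequality (exactly the version already invoked in the paper) turns this infeasibility into the existence of $y' > 0$ and $y'' \geq 0$ with $y' \kappa_j^{k'} + y'' \kappa_j^{k''} \geq 0$. Any such pair $(y', y'')$ witnesses the polarization requirement for $(k', k'')$ immediately, since that requirement only demands at least one weight be strictly positive.

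For the converse, I would observe that any polarization witness for the pair $(k', k'')$ must in fact have $y' > 0$: a witness with $y' = 0$ would force $y'' > 0$ and $y'' \kappa_j^{k''} \geq 0$, hence $\kappa_j^{k''} \geq 0$, contradicting the fact that a trading set $T_j^{k''} \neq \{\eta_j\}$ in a polarized-ray menu is defined by a non-monotone direction. Pairs with $T_j^{k'} = \{\eta_j\}$ (so $\kappa_j^{k'} = 0$) make both the minimization condition and the polarization condition trivially true; pairs with $T_j^{k''} = \{\eta_j\}$ are excluded from the lemma's quantifier and are handled by the symmetric $(k'', k')$ instance. The only subtle step is ruling out the degenerate polarization witness via non-monotonicity; the Farkas invocation itself is routine given the template already set in the paper, so I do not anticipate a genuine obstacle.
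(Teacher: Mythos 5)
Your proposal is correct and follows essentially the same route as the paper: a pairwise application of Farkas' lemma (in the same single-strict-inequality form used in Claim~5 of \Cref{thm:osp_trade}) translating infeasibility of $\{c_j \geq 0,\ c_j \cdot \kappa_j^{k''} \leq 0,\ c_j \cdot \kappa_j^{k'} < 0\}$ into the existence of a witness $y' > 0$, $y'' \geq 0$ with $y'\kappa_j^{k'} + y''\kappa_j^{k''} \geq 0$. In fact you are slightly more careful than the paper's own proof, which simply says ``assume without loss of generality that $y' > 0$'': your argument that $y'=0$ would force $\kappa_j^{k''} \geq 0$, contradicting non-monotonicity when $T_j^{k''} \neq \{\eta_j\}$, is exactly the justification that makes that step (and the lemma's exclusion of $T_j^{k''} = \{\eta_j\}$) legitimate.
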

\begin{proof}
    Proof in \Cref{proof:polarized_alt}.
\end{proof}

Despite these simplifications, the extent to which $S_j$ can be efficiently characterized remains an open question, which we leave for future work. 

\begin{remark}
    \cite{baron2025mechanism} study a version of the large-market program with only two types of task, but allowing for any BIC mechanism. However, they show that under a regularity condition on the type distribution, the optimal mechanism is in fact a polarized-ray mechanism (with two rays). In this case, the restriction to OSP trading mechanisms is without loss of optimality (in the large-market program) and the support function can be characterized analytically. 
\end{remark}

\subsection{From the large market back to the small}

As discussed above, any EOPR mechanism improves upon the status quo, both from the designer's perspective and from that of every agent. A solution to the large-market program is a natural candidate. This mechanism is (unsurprisingly) approximately optimal with a large finite number of agents. To formalize this claim, consider a sequence of replica economies, indexed by an integer $N$. In the $N$-replica economy, there are $N$ copies of each agent with identical performance, but with preferences drawn independently from the same distribution. We refer to these agents as ``group $j$''. There are also $N n^i$ copies of task $i$.

Fix a solution to the large-market program, which is an EOPR mechanism that we denote by $m^*$. For each $N$, we implement the mechanism $m^*$. If $m^*$ involves randomization for agent $j$ (as permitted in the large-market program) then we replicate the randomization in the $N$-replica economy by offering each trading protocol in the support of the randomized mechanism to a proportional fraction of the $N$ group-$j$ agents.\footnote{That is, if under $m^*$ agent $j$ receives trading protocol $\{T_j^k\}_{k \in K_j}$ with probability $\varepsilon$, then $\lfloor \varepsilon N \rfloor$ of the group-$j$ agents receive this trading protocol (any residual agents can be allocated to any mechanism in the support of $m^*$, as this will not affect the asymptotic results).} Let $V^*_{LM}(N)$ be the random variable corresponding to the social cost under this mechanism, and let $V^*(N)$ be the corresponding random variable for the optimal mechanism in the $N$-replica economy (they are random because they depend on the realized preference profile). We say that the solution to the large-market program is \textit{asymptotically optimal as the market grows} if $\frac{1}{N}V^*_{LM}(N) \xrightarrow{a.s.} \frac{1}{N} V^*(N)$ as $N \rightarrow \infty$.  

\begin{proposition}\label{prop:asymptotic}
    The optimal mechanism in the large-market program is asymptotically optimal in the original program as the market grows.
\end{proposition}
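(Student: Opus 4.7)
The plan is to sandwich $\frac{1}{N}V^*(N)$ between $V^*_{LM}$, a lower bound from convexity, and the cost of an explicit feasible implementation of $m^*$, which converges almost surely to $V^*_{LM}$; then upgrade the sandwich from expectation to almost sure. For the lower bound I would show $\E[V^*(N)] \geq N V^*_{LM}$ for every $N$. Starting from the optimal mechanism $m^*_N$, symmetrize across the $N$ replicas within each group $j$: since the preferences $(c_{j,n})_{n=1}^N$ are exchangeable, symmetrization preserves OSP (each agent has an obviously dominant strategy conditional on the realized random menu), ex-post MC, and expected cost. Let $y_j$ be the resulting per-agent expected allocation in group $j$. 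By convexity of $\mathcal{F}_j$, $y_j \in \mathcal{F}_j$, and MC gives $\sum_j y_j = (n^i)_{i \in \mathcal{I}}$, so $(y_j)_{j \in \mathcal{J}}$ is feasible in the outer program \eqref{prog:outer}, yielding $\sum_j \pi_j \cdot y_j \geq V^*_{LM}$.

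For the upper bound I would show $\frac{1}{N}V^*_{LM}(N) \to V^*_{LM}$ almost surely. Consider the ``naive'' assignment in which each agent $(j,n)$ receives $P_j^*(s_j^*(c_{j,n}))$ from $m^*$. By the strong law of large numbers, the per-capita naive cost converges almost surely to $V^*_{LM}$, and the aggregate $X^N(i) := \sum_{j,n} [P_j^*(s_j^*(c_{j,n}))](i)$ has mean $Nn^i$; as a sum of $NJ$ independent bounded terms, Hoeffding's inequality combined with Borel-Cantelli gives $|X^N(i) - Nn^i| = O(\sqrt{N \log N})$ almost surely. To restore ex-post market clearing, reassign $O(\sqrt{N \log N})$ agents to their endowments $\eta_{j,n}$; this is always feasible because $\eta_{j,n}$ lies in every trading set and $\sum_{j,n} \eta_{j,n}$ matches the supply exactly, and per-reassignment cost is bounded, so the total adjustment contributes $o(N)$ almost surely. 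Because the ex-post optimal selection rule does no worse than this explicit feasible adjustment, the claim follows.

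The sandwich $V^*_{LM} \leq \E[V^*(N)]/N \leq \E[V^*_{LM}(N)]/N \to V^*_{LM}$ (the last convergence by dominated convergence applied to the upper bound step) gives $\E[V^*(N)]/N \to V^*_{LM}$. To upgrade to almost-sure convergence of $V^*(N)/N$, I would apply McDiarmid's bounded-differences inequality to $V^*(N)$ viewed as a function of the $NJ$ independent preference draws: swapping one $c_{j,n}$ changes agent $(j,n)$'s myopic choice and hence perturbs the value of the ex-post optimal selection LP by $O(1)$, since trading sets are rays of bounded effective scale and the rebalancing across agents is capacity-constrained. This yields $|V^*(N) - \E[V^*(N)]| = O(\sqrt{N \log N})$ almost surely, hence $V^*(N)/N \to V^*_{LM}$ almost surely, and combining with the upper bound step gives $\frac{1}{N}(V^*_{LM}(N) - V^*(N)) \to 0$ almost surely. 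The main obstacle is verifying the bounded-differences constant is $O(1)$ uniformly in $N$: one needs a careful look at how a single myopic-choice change propagates through the selection LP, which should follow from the observation that any candidate optimum has trading rays of bounded scale (otherwise cost blows up). A secondary subtlety is handling infinite $K_j$ in the symmetrization step, which can be addressed by passing to the essential reduction.
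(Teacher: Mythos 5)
Your overall sandwich architecture is the same as the paper's (lower bound for the $N$-replica value via averaging into the $\mathcal{F}_j$'s, upper bound by implementing $m^*$ in the replica economy and showing its per-capita cost tends a.s.\ to the large-market value), and your symmetrization/convexity argument is a reasonable way to justify the lower bound that the paper simply asserts. The genuine gap is in the upper-bound feasibility patch. Because $\sum_{j,n}\eta_j$ clears the market exactly, the imbalance of your ``naive'' assignment equals the sum of the realized trade vectors, and restoring ex-post market clearing requires the trades you retain to sum to \emph{exactly} zero in every coordinate. Reassigning $O(\sqrt{N\log N})$ whole agents to their endowments only deletes a discrete subset of these real-valued trade vectors, and generically no such subset removal makes the residual sum vanish; your stated justification (that endowments clear the market) only covers reassigning \emph{all} agents, which destroys the cost guarantee. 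The repair must exploit the fact that the chosen trading sets are rays, so trades can be scaled continuously. This is exactly what the paper does: it works with the empirical frequencies $\rho_j^k(N)$ of each (group, menu-item) cell, gives every agent in cell $(j,k)$ the trade $\frac{\delta}{1+\gamma_j^k}\hat\alpha_j^k\kappa_j^k$ when $\rho_j^k(N)=(1+\gamma_j^k)\E_j[\ind\{s_j(c_j)=k\}]$, so each cell's aggregate trade equals its large-market value exactly, and shrinks first by $\delta<1$ (costing at most $\varepsilon$) so that under-represented cells ($\gamma_j^k<0$) can be scaled up without violating non-negativity of allocations. Some device of this kind is needed in your construction; as written, the ``explicit feasible implementation'' is not feasible.

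The McDiarmid step is a secondary problem, for two reasons. First, the bounded-differences constant is not $O(1)$ uniformly in $N$: under the ex-post optimal selection rule a single agent can be allocated order-$N$ tasks along their ray, so swapping one type can in principle move the LP value by much more than a constant, and your proposed fix (``bounded effective scale'' of any candidate optimum) is exactly the point that would need proof. Second, it is avoidable: the paper closes the sandwich by bounding the $N$-replica optimum below by the large-market value (the step your symmetrization argument supplies) and above by the realized cost of the replicated $m^*$ with uniform-within-cell allocations, whose per-capita value is a parametric LP in $\rho_j^k(N)$; the strong law applied to $\rho_j^k(N)$ plus upper semicontinuity of that LP value at the limiting frequencies (proved by the rescaling above) finishes the argument, with no concentration inequality for $V^*(N)$ required. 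So the route through McDiarmid both introduces an unresolved obstacle and is unnecessary once the clearing repair is done correctly.
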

\begin{proof}
    Proof in \Cref{proof:asymptotic}.
\end{proof}

% \section{Simulations}

\section{Conclusion}\label{sec:conclusion}

This paper studies the problem of assigning heterogeneous tasks (or objects) across agents when agents’ preferences are private information and each agent can insist on a status-quo assignment. The first contribution is to provide a sharp characterization the conditions under which it is optimal for the designer to relinquish control in order to allow assignments to respond to agents’ preferences via choice-based mechanisms. \Cref{thm:choice} shows that, under mild conditions on the status quo and on the richness of preference heterogeneity, choice has value whenever the underlying allocation problem is non-trivial: whenever agent performance is not identical and preference distributions are non-degenerate, there exist mechanisms that strictly improve on the status quo. Moreover, the result establishes that in terms of determining whether choice is valuable, restricting attention to obviously strategy-proof trading mechanisms is without loss relative to the full class of Bayesian incentive compatible mechanisms. 

The second contribution is an explicit and practically useful characterization of the class of and obviously strategy-proof trading mechanisms. Theorem 2 shows that all OSP trading mechanisms are effectively polarized ray mechanisms: agents either receive no effective choice, or choose among a menu of polarized rays emanating from the status quo. This characterization delivers a dimension reduction both in terms of the shape of trading sets and the size of the menu. Moreover, such mechanisms are robust: ex-post optimal polarized ray (EOPR) mechanisms guarantee weak improvements over the status quo regardless of the distribution of preferences. Finally, the paper provides guidance for implementation by developing a large-market relaxation with a (relatively) tractable dual formulation and establishing that the resulting design is asymptotically optimal in large replica economies, thereby offering a principled route for selecting among OSP candidates in practice.

A number of interesting questions remain unanswered. For instance, how does the characterization of OSP mechanisms change when we allow for transfers? The characterization is likely to change more in settings where tasks are ``goods'' (see \Cref{sec:transfers})? Another question is how we can identify EOPR mechanisms that perform well according to the designer's objective. The dual formulation of the large-market program simplifies this problem, as discussed in \Cref{sec:lm}. However, this approach requires characterizing the support function $S_j$, which entails solving a difficult optimization problem in which the domain of choice is menus of polarized rays. Computational approaches to approximately solve this program would be useful in practice.

\newpage
\bibliography{references}

\newpage
\appendix
\section{Omitted proofs}

\subsection{Proof of \texorpdfstring{\Cref{prop:duality}}{}}\label{proof:duality}

\begin{proof}
We first verify that the program in \cref{prog:dual} is indeed the dual of \cref{prog:outer}. Letting $\lambda^i$ be the multipliers on the LM-MC constraints, we have
\begin{align*}
    \min_{(x_j)_{j\in \mathcal{J}}}& \sup_{\lambda \in \R^I} \ \sum_{j\in \mathcal{J}} \pi_j \cdot x_j - \sum_{i\in \mathcal{I}}\lambda^i \left(\sum_{j \in \mathcal{J}} x_j(i) -  n^i\right)
    \quad s.t. \quad x_j \in \mathcal{F}_j \quad \forall \ j \in \mathcal{J} \\
    & \geq  \sup_{\lambda \in \R^I} \min_{(x_j)_{j\in \mathcal{J}}} \ \sum_{j\in \mathcal{J}} \pi_j \cdot x_j - \sum_{i\in \mathcal{I}}\lambda^i \left(\sum_{j \in \mathcal{J}} x_j(i) -  n^i\right)
    \quad s.t. \quad  x_j \in \mathcal{F}_j \quad \forall \ j \in \mathcal{J} \\
    &= \sup_{\lambda \in \R^I} \  \sum_{i \in \mathcal{I}} \lambda^in^i - \sum_{j\in \mathcal{J}} \max_{x_j\in \mathcal{F}_j} \sum_{i \in \mathcal{I}} \left(\lambda^i - \pi_j(i)\right) x_j^i \\
    &= \sup_{\lambda \in \R^I} \ \sum_{i \in \mathcal{I}} \lambda^in^i - \sum_{j \in \mathcal{J}} S_j(\lambda - \pi_j)
\end{align*}
where inequality in the second line follows from weak duality, and the last line from the definition of the support function. 

Because giving the single trading set $\{\eta_j\}$ to agent $j$ is feasible, we have $\eta_j \in \mathcal{F}_j$. If $\eta_j$ is in the interior of $\mathcal{F}_j$ for all $j$ then Slater's condition is satisfied, so we are done. Otherwise, define a perturbed problem by letting
\[
\mathcal{F}_j^{\varepsilon} = \{x \in \R^I_+ \ : \exists \ y \in \mathcal{F}_j \text{ with } |x - y| \leq \varepsilon\}.
\]
for $\varepsilon \geq 0$, and let $S_j^{\varepsilon}$ be the support function of this set. Then $\eta_j$ is in the interior of $\mathcal{F}_j^{\varepsilon}$ for all $j$, so Slater's condition is satisfied for any $\varepsilon > 0$ if we replace $\mathcal{F}_j$ with $\mathcal{F}_j^{\varepsilon}$. Let $P_{\varepsilon}$ and $D_{\varepsilon}$ be the values of the primal and dual programs at $\varepsilon$. Notice that $\mathcal{F}_j^{\varepsilon}$ is compact and convex valued and $\varepsilon \mapsto \mathcal{F}_j^{\varepsilon}$ is upper- and lower-hemicontinuous on $\R_+$. Thus $\varepsilon \mapsto P_{\varepsilon}$ is continuous by Berge's maximum theorem, and converges to $P_0$ as $\varepsilon \rightarrow 0$.

Now note that $\varepsilon \mapsto S_j^{\varepsilon}(w)$ is increasing, so $\varepsilon \mapsto D_{\varepsilon}$ is decreasing. Moreover, weak duality (for a minimization problem) implies that $P_0 \geq D_0$. Then $P_{\varepsilon} = D_{\varepsilon} \leq D_{0} \leq P_0$ for all $\varepsilon > 0$, and $P_\varepsilon \rightarrow P_0$ as $\varepsilon \rightarrow 0$, which implies $D_0 = P_0$.  
\end{proof}

\subsection{Proof of \texorpdfstring{\Cref{lem:polarized_alt}}{}}\label{proof:polarized_alt}

\begin{proof}
    The rays defined by $\kappa_j^{k'}, \kappa_j^{k''}$ are polarized (by definition) if and only if there exists scalars $y',y'' \geq 0$, with at least one strictly positive, such that $y'\kappa_j^{k'} + y'' \kappa_j^{k''} \geq 0$. Assume without loss of generality that $y' > 0$. By Farkas' Lemma, the following are equivalent 
    \begin{enumerate}[i.]
       \item There does not exist $c_j \in \R^I_+$ such that $c_j \cdot \kappa_j^{k''} \leq 0$ and $c_j \cdot \kappa_j^{k'} < 0$.
       \item There exist $y'' \geq 0, y' > 0$, such that $y'\kappa_j^{k'} + y'' \kappa_j^{k''} \geq 0$.
   \end{enumerate}
   The first condition can be re-stated as: $c_j \cdot \kappa_j^{k'} \geq 0 \ \ \forall \ c_j \in \R^I_+$ such that $c_j \cdot \kappa_j^{k''} \leq 0$, which is what we wanted to show.
\end{proof}

\subsection{Proof of \texorpdfstring{\Cref{prop:asymptotic}}{}}\label{proof:asymptotic}

\begin{proof}
To simplify notation, assume that the large-market mechanism does not involve randomization. It is straightforward to extend the argument to the case with randomization by sub-dividing the groups.

As the mechanism is OSP, the strategy of each agent is independent of $N$. Let $d_j^k(N)$ be the random variable corresponding to the set of group-$j$ agents who choose trading set $T_j^k$ in the $N$-replica economy. The mechanism allows the designer can choose different allocations for agents in $d_j^k(N)$. However, towards establishing asymptotic convergence we can assume that (perhaps sub-optimally) the designer does not do so. Define the random variable 
\begin{align*}
    \tilde{V}^*_{LM}(N) = \min_{(\mu_{j}^k)} \sum_{j \in \mathcal{J}} \sum_{k \in K_j} |d_j^k(N)| \pi_j^k \cdot \mu_j^k \quad s.t. \quad &\mu_j^k \in T_j^k \ \forall \ j \in \mathcal{J}, \ k \in K_j, \\ 
     & \frac{1}{N} \sum_{j \in \mathcal{J}} \sum_{k \in K_j} \mu_j^k |d_j^k(N)| = (n_1,\dots,n_I)
\end{align*}
Let $V^*_{LM}(\infty)$ be the value obtained in the large-market program, which is a lower bound on $\frac{1}{N}V^*(N)$ for any $N$. Then to prove the result it suffices to show that $\frac{1}{N} \Tilde{V}^*_{LM}(N) \rightarrow V^*_{LM}(\infty)$ almost surely as $N \rightarrow \infty$. Define the random variable $\rho_j^k(N) := \frac{|d_j^k(N)|}{N}$ (note that $\sum_{k \in K_j} \rho_j^k = 1$ for all $j$). The we can write
\begin{align}
    \frac{1}{N}\tilde{V}^*_{LM}(N) &= \min_{(\mu_{j}^k)} \sum_{j \in \mathcal{J}} \sum_{k \in K_j} \rho_j^k(N) \pi_j^k \cdot \mu_j^k \label{prog:asymptotic} \\
    s.t. \quad &\mu_j^k \in T_j^k \ \forall \ j \in \mathcal{J}, \ k \in K_j, \label{eq:replica_PR} \\ 
     & \sum_{j \in \mathcal{J}} \sum_{k \in K_j} \mu_j^k \rho_j^k(N) = (n_1,\dots,n_I) \label{eq:replica_MC}
\end{align}
Abusing notation, let $\Tilde{V}^*_{LM}(\rho)$ be the value of this program as a function of $\rho = (\rho_j^k)_{j \in \mathcal{J}, k \in K_j}$. If $\rho_j^k = \E_j[\mathbbm{1}\{ s_j(c_j) = k \}]$ then $\Tilde{V}^*_{LM}(\rho) = V^*_{LM}(\infty)$. By the strong law of large numbers $\rho_j^k(N) \xrightarrow{a.s.} \E_j[\mathbbm{1}\{ s_j(c_j) = k \}]$ as $N \rightarrow \infty$ for all $j,k$. Thus we just need to show that $\Tilde{V}^*_{LM}(\rho)$ is upper-semicontinuous at $\E_j[\mathbbm{1}\{ s_j(c_j) = k \}]$. 

Let $\kappa_j^k$ be a vector defining the ray $T_j^k$, chosen so that the non-negativity constraint binds at $\eta_j + \kappa_j^k$. Then any $\mu_j^k$ satisfying constraint (\ref{eq:replica_PR}) can be written as $\mu_j^k = \eta_j + \alpha^j_k \kappa_j^k$ for some scaling factors $\alpha_j^k \in [0,1]$. Let $(\hat{\alpha}_j^k)$ be the scaling factors that define the solution to the large-market program. Without loss of generality, we can assume that $\Hat{\alpha}_j^k > 0$, since otherwise $T_j^k$ can just be replaced by ${\eta_j}$. Note also that the allocation defined by $(\delta \hat{\alpha}_j^k)$ also satisfies market clearing for any $\delta \in [0,1]$, because the resulting mechanism is just a convex combination of the large-market solution and the endowment $\eta$, and thus $(\delta \hat{\alpha}_j^k)$ is also a valid mechanism in the large-market program. Then for any $\varepsilon \geq 0$ we can choose $\delta < 1$ so that the value of $(\delta \hat{\alpha}_j^k)$ in the large-market program, denote it by $V^*_{\delta}$, is no more than $V^*_{LM}(\infty) + \varepsilon$. Then to show that $\Tilde{V}^*_{LM}(\rho)$ is upper-semicontinuous at $\E_j[\mathbbm{1}\{ s_j(c_j) = k \}]$ it suffices to show that for any $\varepsilon > 0$ and $\delta \in (0,1)$ there exists $\sigma > 0$ such that $|\E_j[\mathbbm{1}\{ s_j(c_j) = k \}] - \rho_j^k | < \sigma$ implies $\Tilde{V}_{LM}^*(\rho) \leq V^*_{\delta} + \varepsilon/2$. 

As the objective is continuous in $\rho$, we just need to show that by taking a $\rho_j^k$ sufficiently close to $\E_j[\mathbbm{1}\{ s_j(c_j) = k \}]$ we can ensure that there exists $(\alpha_j^k)$ such that the resulting mechanism satisfies the market-clearing constraint (\ref{eq:replica_MC}), and such that $(\alpha_j^k)$ is arbitrarily close to $(\delta \hat{\alpha}_j^k)$. As a function of $(\alpha_j^k)$ the market clearing constraint is 
\[
    \sum_{j \in \mathcal{J}} \eta_j + \sum_{k \in K_j} \alpha_j^k \kappa_j^k \rho_j^k = (n_1,\dots,n_I) 
\]
As discussed above, $(\delta \Hat{\alpha}_j^k)$ satisfies market clearing in the large-market program, meaning the previous condition is satisfied if $\alpha_j^k = \delta \Hat{\alpha}_j^k $ and $\rho_j^k = \E_j[\mathbbm{1}\{ s_j(c_j) = k \}]$. Then it is also satisfied if $\alpha_j^k = \frac{\delta}{1+\gamma_j^k} \Hat{\alpha}_j^k $ and $\rho_j^k = (1 + \gamma_j^k)\E_j[\mathbbm{1}\{ s_j(c_j) = k \}]$. Moreover, as $\delta < 1$ we have $\frac{\delta}{1+\gamma_j^k} \Hat{\alpha}_j^k \leq 1$ as long as $|\gamma_j^k|$ is sufficiently small, so that $(\alpha_j^k)$ are valid scaling factors (i.e. the resulting mechanism does not violate the non-negativity constraint). 
\end{proof}

\end{document}